\documentclass[reqno,10pt]{amsart}
 \oddsidemargin9mm
 \evensidemargin9mm 
 \textwidth13.5cm

\usepackage{amssymb}
\usepackage{amsmath}
\usepackage{amsthm}
\usepackage{pgf}
\usepackage{color}
\usepackage{graphicx}   
\usepackage{float}

\parskip2mm

%%%%%%%

\newtheorem{theorem}{Theorem}[section]

\newtheorem{proposition}[theorem]{Proposition}

\newcommand{\be}[1]{\begin{equation}\label{#1}}
\newcommand{\ee}{\end{equation}}

\newcommand{\Rz}{\mathbb{R}}

\newcommand{\DD}{{\rm D}}

\def\eps{\varepsilon}

\def\XXint#1#2#3{{\setbox0=\hbox{$#1{#2#3}{\int}$}
     \vcenter{\hbox{$#2#3$}}\kern-.5\wd0}}

\definecolor{applegreen}{rgb}{0.55, 0.71, 0.0}

%\mathbf{\theta}}
\newcommand{\f}{\varphi}
%{\mathbf{\f}}
\newcommand{\bfc}{{c}}

\usepackage{latexsym,amsfonts}
\usepackage{mathrsfs}
\usepackage{centernot}
\usepackage{paralist}

\usepackage{eso-pic}  
\usepackage{pifont}
\usepackage{fixmath} 

\usepackage{metalogo}
\usepackage{booktabs}
\usepackage{hyperref}
\makeatletter
\hypersetup{%
     pdfpagemode={UseOutlines},
     bookmarksopen,
     pdfstartview={FitH},
     colorlinks,
     linkcolor={blue},
     citecolor={red},
    urlcolor={red}
  }

\theoremstyle{definition}
\begingroup
\newtheorem{definition}[theorem]{Definition}
\newtheorem{remark}[theorem]{Remark}

\endgroup

\usepackage{metalogo}
\usepackage{booktabs}
\usepackage{hyperref}
\makeatletter
\hypersetup{%
     pdfpagemode={UseOutlines},
     bookmarksopen,
     pdfstartview={FitH},
     colorlinks,
     linkcolor={blue},
     citecolor={red},
    urlcolor={red}
  }

\usepackage{comment}

\begin{document}

\title[The geometry of $C_{60}$]{The geometry of $C_{60}$: \\
a rigorous approach via Molecular Mechanics}

\author{Manuel Friedrich}
\address[Manuel Friedrich]{Faculty of Mathematics, University of Vienna, Oskar-Morgenstern-Platz~1, A-1090 Vienna, Austria}

\author{Paolo Piovano}
\address[Paolo Piovano]{Faculty of Mathematics, University of Vienna, Oskar-Morgenstern-Platz~1, A-1090 Vienna, Austria}

\author{Ulisse Stefanelli}
\address[Ulisse Stefanelli]{Faculty of Mathematics,  University of Vienna, Oskar-Morgenstern-Platz~1, A-1090 Vienna, Austria $\&$ Istituto di Matematica Applicata e Tecnologie Informatiche ``E. Magenes'' - CNR, v. Ferrata 1, I-27100 Pavia, Italy}

\keywords{Fullerene, $C_{60}$, configurational energy minimization,
  local stability}

\begin{abstract} 
Molecular Mechanics describes molecules as particle
configurations interacting via classical potentials. 
These {\it configurational energies} usually
consist of the sum of different phenomenological terms which are
tailored to the description of specific bonding
geometries. This approach is followed here to model the
fullerene $C_{60}$, an allotrope of carbon corresponding to a specific
hollow spherical structure  of sixty atoms.  We rigorously address different
modeling options and advance a set of minimal
requirements on the configurational energy able to deliver an
accurate prediction of the fine three-dimensional geometry of
$C_{60}$ as well as of its remarkable stability.  In particular,
the experimentally observed truncated-icosahedron structure with two
different bond lengths is shown to be  a strict local minimizer. 
\end{abstract}

\subjclass[2010]{82D25.} 
\maketitle

\pagestyle{myheadings}

%%%%%%%%%%%%%%%%%%%%%%%%%%%%%%%%%%%%%%%%%%%%%%%%%% 
\section{Introduction}

 The molecule  $C_{60}$ is an allotrope of carbon formed by $60$ atoms  sitting  at the
vertices of a truncated icosahedron. Theoretically discussed in 
\cite{Osawa}  and \cite{Botchvar}, its serendipitous experimental discovery
in 1985 lead to the attribution of the 1996 Nobel Prize in
Chemistry to Curl, Kroto, and Smalley \cite{Kroto85,Kroto92}. This truly
remarkable result paved the way for extending the up-to-then known
allotropes, namely graphite,
diamond, and  amorphous carbon, to a whole new class of molecules consisting
of hollow carbon cages, balls, ellipsoids, and nanotubes. The
resemblance of $C_{60}$ with the geodesic domes by the  American
architect  Buckminster 
Fuller  has brought to name these molecules {\it fullerenes}.

 Fullerenes  have attracted an immense deal
of attention. The identification of their three-dimensional structure,
the study of their chemical properties among which aromaticity,
solubility, and electrochemistry, and their application in medicine
and pharmacology have
developed into the new branch of {\it Fullerene Chemistry}. A
central question  concerning fullerenes is their  {\it
  stability}  
\cite{Kroto87}, either from the thermodynamic, the electrochemical,
or the mechanical standpoint. Stability is believed to be the key
factor in explaining why just a few fullerene isomers out of a theoretically
predicted wide variety have been
actually revealed. Among these the fullerene $C_{60}$ is remarkably
stable   and considerable amounts of   these 
molecules have been detected in interstellar space,  despite the  harsh radiation environment \cite{ABBS}.

The aim of this paper is to provide a rigorous discussion of the
geometric structure and the
stability properties of the $C_{60}$ molecule. This analysis is set
within  the variational
frame of {\it Molecular Mechanics} \cite{Allinger,Lewars,RC}. This
consists in 
modeling molecular
configurations in terms of classical mechanics:  atomic relations are
described by
classical interaction potentials between atomic positions. Although far from the quantum
nature of molecular  bonding,  this approach has proved computationally
effective, especially in the case of large molecules, bringing indeed 
to the award of the 2013 Nobel
Prize in Chemistry to 
Karplus, Levitt, and Warshel. We shall express the {\it energy} of a
carbon configuration as
\begin{equation}\label{e} E  = E_{\rm bond} + E_{\rm angle} + E_{\rm dihedral} + E_{\rm  nonbond }.
\end{equation}
In the latter, $E_{\rm bond}$ describes two-body interactions, it is 
short-ranged, and favors some specific bond length, here
normalized to $1$. The term $E_{\rm angle}$ is a three-body
interaction energy instead \cite{Brenner90,Stillinger-Weber85,Tersoff},  favoring the formation of $2\pi/3$ or $4\pi/3$ angles between
first-neighbor bonds.  This corresponds to  the so-called $sp^2$-orbital hybridization of
carbon atoms, determining indeed the geometry of
approximately flat, locally two-dimensional carbon structures, such as
graphene and nanotubes.  Note that reducing to pure $sp^2$ hybridization to
describe the  truly
three-dimensional nature of $C_{60}$ is questionable. Still, this
simplification delivers the correct geometry of the molecule and 
possible extensions of this perspective are reported in Remark
\ref{remark}. The term $E_{\rm dihedral}$ is a four-body
contribution, favoring planarity of the bonds at a given atom.
  Finally, the term $E_{\rm
   nonbond }$ represents  {\it nonbonded}
interactions. These may include van der Waals attraction, steric repulsion, and electrostatic
effects.  %long-range, attractive (van der Waals) interactions.  %{\cred \tt Possibly, if you want, adapt notation for  nonbond , but for me it is also o.k. like that}

Our focus is to identify a minimal set of assumptions on $E$ delivering the
local minimality of the correct geometric structure of $C_{60}$: the
sixty atoms sit at the intersections of the edges of  an % {\cred
                                % regular   {\tt better drop the word
                                % regular or replace it by non
                                % regular; we usually called
                                % configurations with two distinct
                                % bond lengths   `non regular
                                % icosahedron'} } 
 icosahedron with a sphere with the same center. This results in a football-like geometry
consisting of twelve equal
regular planar pentagons and twenty equal planar hexagons. We call $\mathcal X$ all such
truncated-icosahedral configurations and remark that they are uniquely
 determined  (up to isometries) by specifying the lengths $a$ of the
side shared by two hexagons and the length $b$ of the sides of the
pentagons  (see Figure \ref{csixty}).  The corresponding configuration is indicated by
$X_{a,b}$. These two lengths are indeed different for the $C_{60}$ molecule: nuclear-magnetic-resonance experiments provide values of $a=1.40\pm 0.015$ \AA
\, and  $b=1.45\pm 0.015$ \AA, respectively \cite{Yannoni-etal}. 

\begin{figure}[tp] 
\begin{center}
\includegraphics[scale=0.4]{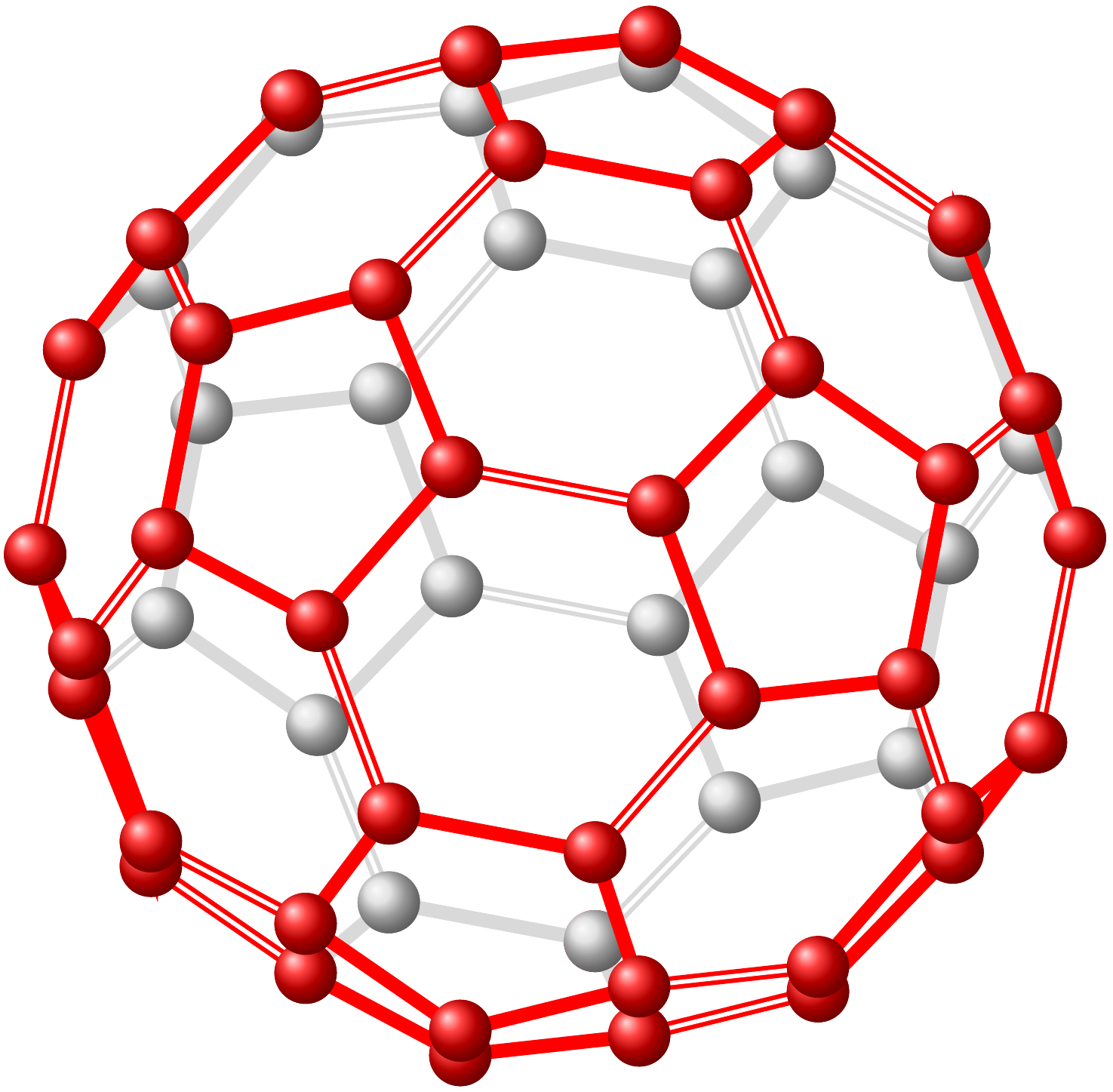}
 \caption{The geometry of $C_{60}$. The bonds shared by two hexagons are here indicated by double lines.} 
\label{csixty}
\end{center}
\end{figure}

The investigation of the structure of $C_{60} $ via
variational methods has been initiated in
\cite{Mainini-Stefanelli12,Stefanelli16}, where,
under suitable convexity assumptions, the energy $E_{\rm bond}+E_{\rm
  angle}$ is proved to be locally minimized by $X_{1,1}$. The  short-rangedness 
of this particular
energy form induces  this  % the full icosahedral symmetry of this
local minimizer  to have all bonds of length $1$, 
 which is indeed not reflecting the fine geometry of
the $C_{60}$ with two distinct bond lengths. In order to take long-range effects into account one is
tempted to consider the energy $E_{\rm bond}+E_{\rm
  angle} + E_{\rm  nonbond }$ instead. This has the effect of
bringing  (at least) second-neighbors into the picture, hence potentially
distinguishing between  bonds of type $a$ and $b$.  The
addition of the  nonbonded-interaction  term, however, induces a shortening of
second-neighbor bonds and a key conclusion of our paper is the observation that without additional assumptions on the energy local minimality of any configuration $X_{a,b}$ is  eventually prevented. Our main positive result is then that the inclusion of a dihedral term in the
energy, namely $E = E_{\rm bond} + E_{\rm angle} + E_{\rm dihedral} +
E_{\rm  nonbond } $, restores the  icosahedral  symmetry and entails the local minimality
of a 
configuration $X_{a^*,b^*}$ with $a^* <
 b^*$. 

In the following we critically review the effect of single terms
in $E$, by providing an accurate formalization of the above discussion. With
respect to the original {\it computational} nature of
Molecular Mechanics this {\it rigorous} approach seems  unprecedented ,
contributing indeed a novel justification of the variational
perspective and a way of validating specific modeling choices. Indeed, a variety of
different molecular mechanical codes \cite{Brooks83,Clark89,Gunsteren87,Mayo90,Weiner81}
have been presented, corresponding to different
phenomenological choices for the single terms in $E$ (as well as for possible
additional effects, not included in our analysis). A by-product of our
results is hence the 
cross-validation of these choices in view of their capability of
describing the actual geometry of $C_{60}$.  Let us briefly
  mention that  the modeling options discussed in this article are
  also consistent with the characterization of the geometry and
  stability of other carbon structures such as graphene or  the
  fullerene  $C_{20}$. Moreover,  this  variational
  approach has  proved effective to describe  a wider class of  carbon structures \cite{Stefanelli16}, including e.g.  carbon nanotubes \cite{MMPS1,MMPS2}.
  
    Before moving on, we would like to contextualize the results
  of this paper with respect to the available literature. %e results
                                %our problem with respect to 
Our analysis is related to the  classical   {\it
  Crystallization problem}, which consists in characterizing crystals
at zero temperature as periodic ground states  of suitable
configurational energies that include two- and three-body interaction
terms. 

In one space dimension,  the reader is referred  with no claim of
completeness to
\cite{Blanc-et-al02,Gardner-Radin79,Hamrick-Radin79,Radin83,Ventevogel78,Ventevogel-Nijboer79,Ventevogel-Nijboer79bis,
  Wagner83} for a collection of results proving or disproving, under
different choices for the energy, the minimization property of an
equally spaced configuration of atoms and its stability with respect
to perturbations.

Ground states in two dimensions have been proved to be subsets of the
triangular lattice under pure two-body interactions in
\cite{Heitman-Radin80,Radin81,Wagner83} for specific potentials. The considerably more involved case of
Lennard-Jones-like potentials has been analyzed  in
\cite{Theil06} as the number
of atoms of the configuration tends to infinity.  The  hexagonal case
is addressed by including in the energy a  
three-body interaction term favoring 
wells at $2\pi/3$ and $4\pi/3$ angles both in the finite
crystallization case \cite{Mainini-Stefanelli12} and in  the thermodynamic
limit \cite{E-Li09}. The recent \cite{Smereka15} obtains  a  hexagonal
lattice in the thermodynamic limit under the effect of an energy
favoring $\pi$  angles  instead. Eventually, the case of the square
lattice is tackled in \cite{MPS,MPS2}. Here the energy favors
$\pi/2$, $\pi$, and $3\pi/2$ bond angles.

The only three dimensional crystallization result presently available
is in \cite{Flatley-Theil12}, where a face-centered cubic lattice is
recovered as the thermodynamic limit under pairwise and three-body
interactions favoring $\pi/3$ bond  angles, see also 
\cite{Flatley-Taylor-Tarasov-Theil12}. All the mentioned results concern
the zero-temperature setting. Finite temperatures have been tackled in
the one dimensional case only  \cite{Jansen-atal}.  We refer the
reader to \cite{BL} for an extended review on this topic. 

In contrast with the classical  crystallization problem, we are not
concerned  here with ground-state characterization but rather with the
analysis of the $C_{60}$ configuration in terms of its stability.
Note that various concepts  of crystal stability are available. We refer
the reader to  \cite{elliott-atal1} for a discussion of the connections between phonon-stability,
homogenized-continuum stability, and Cauchy-born stability in the case
of three-dimensional crystals and  \cite{sfyris} for an application at
the continuum level for free-standing graphene. The validity of the
Cauchy-Born assumption for crystalline solids has been also discussed
in \cite{e-ming}  and 
\cite{Flatley-Theil02}. All these different stability notions are
qualified via the specification of the corresponding admissible
perturbations. In this regard, our stability notion seems to be the strongest since all small perturbations of atomic positions are allowed.

The plan of the paper is the following. We formalize our setting  and we state our main  results in Section \ref{main results
  section}. We also provide a classification of all
  modeling  options  by exactly characterizing the  cases in
  which $C_{60}$ can be identified as a local minimizer of the energy  (cf. Table \ref{table}).   We report in
Subsection \ref{discussion} a discussion of our assumption frame with
respect to various phenomenological
potentials from the literature.  The proof of the   results is then developed in  Sections
\ref{convexity}-\ref{counterexample section}.  More precisely, the
description of the geometry  of $C_{60}$  is addressed in
Section \ref{convexity} and  its   stability under the
presence of a dihedral term are contained in Section \ref{stability
  section}. Afterwards, in Section \ref{counterexample section} we
provide some counterexamples to stability  in absence  of a
dihedral term. These  consist in rotating one pentagonal facet or
simultaneously moving the vertices of a pentagonal facet towards the
center of the cage.

%%%%%%%%%%%%%%%%%%%%%%%%%%%%%%%%%%
\section{Modeling and main results}\label{main results section}

The focus of this section is on introducing the relevant notation and stating
the main  results.

\subsection{Mathematical setting and modeling options}\label{model}
Let $X=\{x_1,\dots,x_{60}\} \in \Rz^3$  indicate a general {\it
  configuration} of sixty atoms in three-dimensional space and let \linebreak
$E: (\Rz^3)^{60}\to \Rz$ be a given {\it configurational energy}
\eqref{e}. The fundamental principle of material objectivity imposes
$E$ to be invariant under rotations and translations. As such, all the following statements have to be intended {\it up to isometries}, unless otherwise specified.

 We shall introduce some specific structure of the terms in \eqref{e}
 by modeling  the basic chemistry of $sp2$-covalent bonding in carbon \cite{Stillinger-Weber85,Tersoff}, namely the specific bonding mode of $C_{60}$.  We  define the two-body  interaction  term $E_{\rm bond}$ as
 \begin{equation*}\label{twobody}
E_{\rm bond}(X):=\frac12\sum_{(i,j)\in\,\mathcal{N}_1(X)}v_{\rm bond}(|x_i-x_j|)
\end{equation*}
where the index set $\mathcal{N}_1(X)$ indicates {\it first neighbors}
and is defined as
$$\mathcal{N}_1(X):=\{(i,j)\,:\,\textrm{ $|x_i-x_j|<\sqrt{2}$}\}.$$
 The potential $v_{\rm bond}:[0,\infty) \to [-1,\infty)$ is assumed to be smooth in the closure of a small open neighborhood $I_{\rm bond}$ of   $1$, 
\begin{align}\label{H2}
v_{\rm bond}(\ell)=-1 \ \text{iff} \ \ell=1,\quad\textrm{and}\quad v_{\rm bond}''(\ell) >0 \text{  for all $\ell \in \overline{I}_{\rm bond}$}.
\end{align}  

This basic
assumptions corresponds to the fact
that covalent bonds in carbon atoms are characterized by some
reference bond length, here normalized to $1$. The choice of the
cut-off value
$\sqrt{2}$ in the definition of first neighbors  is discretional,
yet suggested by the planar case of graphene 
\cite{Mainini-Stefanelli12}.
 In the following we shall also use the notation
$$\mathcal{N}_1(x_i):=\{(i,j) \ : \ \textrm{$j\in\{1,\dots, 60\}$ and $(i,j)\in\mathcal{N}_1(X)$}\}$$
for the set of first neighbors of the atom $x_i\in X$ and denote the
tuples of  lengths of the covalent bonds shared by $x_i$ by 
$$\mathcal{B}_1(x_i):=\{|x_i-x_j| \ : \ (i,j)\in\mathcal{N}_1(x_i)\}.$$
%We remark that the only crystallization results for a finite number of particles in literature are related to short-ranged interactions, i.e. are based on a $v_2$ potential that vanishes at very large distances. In the following we will follow this assumption by assuming that $v_2(r)=0$ for any $r>R$ for a certain $R>1$. Choose $R=\sqrt{2}$. Let us also define active bonds the bonds between two particle $x_i$ and $x_j$ such that $|x_i-x_j|<R$. 

The energy $E_{\rm angle}$ represents three-body interactions and is defined by
\begin{equation*}\label{threebody}
E_{\rm angle} (X):=\frac{1}{2}\sum_{(i,j,k)\in \mathcal{T}(X)}v_{\rm angle}(\alpha_{i j k})
\end{equation*}
where  the index set $\mathcal{T}(X)$ is given by
%\be{triples}
$$\mathcal{T}(X):=\{(i,j,k)\,:\,\textrm{$i\neq k$, $(i,j)\in\mathcal{N}_1(X)$ and $(j,k)\in\mathcal{N}_1(X)$}\},%\mathcal{A}(\{x_1,\dots,x_n\}):=\{(i,j,k)\,:\,\textrm{$|x_i-x_j|<R$ and $|x_k-x_j|<R$}\},
$$
and $\alpha_{i j k}$ denotes the angle determined by the segments $x_i - x_j$ and $x_k - x_j$ (choose anti-clockwise orientation, for definiteness). The potential $v_{\rm angle}: [0,2\pi] \to [0,\infty)$ is
symmetric with respect to $\pi$, attains its minimum value $ 0$  only
at $2\pi/3$ and $4\pi/3$, and is strongly convex in a small closed
neighborhood $I_{\rm angle}$ of  $[3\pi/5,2\pi/3]$,  i.e., 
\begin{align}\label{H3}
\theta \mapsto v_{\rm angle}(\theta) - \lambda_{\rm angle}|\theta|^2 \text{ is convex on $I_{\rm angle}$ for some $\lambda_{\rm angle}>0$}.
\end{align}
These properties will be assumed throughout the paper and model the fact that $sp2$-hybridized orbitals tend to form $2\pi/3$ bond angles \cite{Tersoff}.
The index set
$$\mathcal{A}(X):=\{\alpha_{ijk} \ : \ \textrm{$(i,j,k)\in\mathcal{T}(X)$ and $\alpha_{ijk}\leq\alpha_{kji}$}\}$$
will indicate {\it active angles} of the configuration $X$ while we will denote by 
$$\mathcal{A}(x_j):=\{\alpha_{ijk}\in \mathcal{A}(X)\ : \ \textrm{for some $i,k= 1,\dots,60$}\}$$
the tuple of the active angles at $x_j$. In the following, we will also make use of an alternative three-body energy term $E_{\rm kink}$ of the form of $E_{\rm angle}$, namely
\begin{equation*}\label{threebodyk}
E_{\rm kink} (X):=\frac{1}{2}\sum_{(i,j,k)\in \mathcal{T}(X)}v_{\rm kink}(\alpha_{i j k})
\end{equation*}
where $v_{\rm kink}$ fulfills the same  assumptions  as $v_{\rm angle}$
and is additionally differentiable in a small left neighborhood of
$2\pi/3$ with 
\begin{equation}
\lim_{\theta \uparrow 2\pi/3} v'_{\rm
  kink}\left(\theta\right)< 0.\label{kink}
\end{equation}
Note  that $v_{\rm kink}$ is not differentiable at
$2\pi/3$ and $4\pi/3$ where indeed its graph has a {\it kink}.
This is a mathematical assumption which has no
explicit chemical justification. Still, such   a nondifferentiable case is
surprisingly the only one allowing to prove that two-dimensional
minimizers of $E_{\rm bond}+E_{\rm kink}$ are indeed subsets of the
regular hexagonal lattice \cite{Mainini-Stefanelli12}. Note that such
{\it kink assumptions} arise in all finite crystallization results to
date. In particular, they have been considered in connection with the
two-dimensional triangular lattice and the square lattice as well
  \cite{Heitman-Radin80,MPS,MPS2,Radin81,Wagner83}, and see \cite{DPS1,DPS2} for related results.  As such, we
believe the discussion of the term $E_{\rm kink}$ to bear some
relevance.  

The four-body dihedral term $E_{\rm dihedral}$ is defined by
\begin{equation*}\label{fourbody}
E_{\rm dihedral}(X):=   \eta'\sum_{i=1}^{60} v_{\rm dihedral}(\alpha^1_i,\alpha_i^2,\alpha_i^3)
\end{equation*}
for a potential $v_{\rm dihedral}: [0,2\pi)^3\to[0,\infty)$ for all
configurations $X=\{x_1,\dots,x_{60}\}$ such that
$\#\mathcal{A}(x_i)=3$ for all $i=1,\dots,60$,  where $\mathcal{A}(x_i) = \lbrace  \alpha^1_i,\alpha_i^2,\alpha_i^3 \rbrace$. The constant $
\eta'>0$ will be chosen to be suitably small, corresponding indeed to
the smallness of four-body energy effects w.r.t. two- and three-body
energy contributions. We will assume $v_{\rm dihedral}$  to be smooth,  symmetric in its variables and to satisfy
 \begin{align}\label{v4hp}
  &\frac{\rm d }{{\rm d} \varphi} v_{\rm dihedral}(3\pi/5, \varphi,\varphi)\Big|_{\varphi = 2\pi/3}<0.
 \end{align}
The effect of the term $E_{\rm dihedral}$ is that of favoring the
planarity of active bonds at each atom. This again corresponds to
the local bonding geometry of $sp2$ covalent bonding~\cite{Tersoff}.

Eventually,  nonbonded  interactions are included in the energy by considering the term $E_{\rm     nonbond }$ defined by
\begin{equation}\label{H2s}
E_{\rm    nonbond }(X):= \frac{\eta}{2}     \,\sum_{i=1}^{60} \sum_{(j, i,k)\in\mathcal{T}(X)}v_{\rm  nbd }(|x_k-x_j|)
\end{equation}
for a smooth function $v_{\rm  nbd }: [0,\infty)\to[-1,\infty)$
increasing in a small  neighborhood $I_{\rm  nbd }$ of
$[2\sin(3\pi/10),\sqrt{3}]$. The constant $\eta >0$ will be chosen to
be suitably small later on, reflecting indeed the different relevance of  the effects of 
first and second neighbors  in covalent bonding. Note that in
\eqref{H2s}  the potential $v_{\rm  nbd }$ is evaluated over
{\it second neighbors} only, namely atoms corresponding to pairs
$$\mathcal{N}_2(X):=\{(i,k)\ :\ \textrm{$(i,j,k)\in\mathcal{T}(X)$ for
  some $j=1,\dots,60$}\}.$$ 
In particular, we
assume  nonbonded-interaction  effects to be negligible except for  
second neighbors.
We also denote by 
$$\mathcal{B}_2(x_j):=\{|x_i-x_k| \ : \  (i,j,k)\in\mathcal{T}(X)\}$$
the tuple of distances to second neighbors related to the atom $x_j\in X$.
 All the above assumptions on the potentials $v_{\rm bond}$, $v_{\rm angle}$, $v_{\rm kink}$,
  $v_{\rm dihedral}$, and $v_{\rm  nbd }$ are tacitly assumed
  throughout the paper.

In the following we discuss the effect of the various terms in \eqref{e}. For the sake of definiteness we introduce here a more specific notation for the configurational energy by letting 
% \begin{equation}\label{energyE}
%  E_\bfc(C_n):=E_2(C_n)+ c_{2,s}E_{2,s}(C_n) + c_3E_3(C_n)+c_{3,k}E_{3,k}(C_n)+c_4E_4(C_n).
% \end{equation}
\begin{align}\label{energyE}
E_\bfc(X)&:=E_{\rm bond}(X)+ c_{\rm angle}E_{\rm angle}(X)+c_{\rm kink}E_{\rm kink}(X)\nonumber\\
& \ \ \ \ +c_{\rm dihedral}E_{\rm dihedral}(X) + c_{\rm  nbd }E_{\rm    nonbond }(X).
\end{align}
The constants $c_{\rm angle},\, c_{\rm kink},\, c_{\rm dihedral}$, and
$c_{\rm  nbd }$ take values in $\{0,1\}$ and are hence intended to
switch on and off the different energy terms. Correspondingly,
different energies in \eqref{energyE} will be indicated by different
vectors $$\bfc=(c_{\rm angle}, c_{\rm kink}, c_{\rm dihedral},c_{\rm
   nbd }) \in \{0,1\}^4.$$  
In the following $c_{\rm angle}$ and $ c_{\rm kink}$ are never
simultaneously equal to $1$, since the two energies $E_{\rm angle}$
and $E_{\rm kink}$ indeed correspond to the same three-body contribution but
distinguish the case without or with the kink,
respectively. Moreover, it seems natural to consider the four-body
contribution $E_{\rm dihedral}$ only in the case where also three-body
terms are present, namely either for $c_{\rm angle}=1$ or $ c_{\rm
  kink}=1$. Under these restrictions, the discussion of all
possible vectors $\bfc$ of coefficients reduces to exactly
ten cases, all of which are addressed in Theorem \ref{maintheorem}, see
also Table \ref{table}.

\subsection{Main results} Among all configurations a specific subclass $\mathcal{X}$ of {\it
  objective} \cite{James} configurations with icosahedral symmetry
will play a  major  role. These correspond to truncated icosahedra with
two possibly distinct bond lengths and are defined as follows.
\begin{definition}[Icosahedral  Configurations]\label{familyc60}
The set $\mathcal{X}$ is the family of configurations
$X_{a,b}:=\{x_1,\dots,x_{60}\}$, $a,b\in I_{\rm bond} $,
corresponding to the intersections of the edges of a regular icosahedron
with a sphere with the same center,  where $\mathcal{B}_1(x_i)=\{a,b,b\}$
for all $i=1,\dots,60$.
\end{definition}
The set $\mathcal{X}$ is hence a two parameter family of configurations: by connecting first neighbors of $X_{a,b}$ by a straight segment one obtains a polyhedron with twelve regular pentagonal facets with side $b$ and twenty hexagonal facets with three sides of length $a$ and three of length $b$, alternating  (see Figure \ref{csixty}).  In particular, $X_{a,a}$ is a regular truncated icosahedron with side $a$ and we have that, for all $x_i \in X_{a,b}$, 
%\begin{gather}\label{second bonds}
\[
  \mathcal{A}(x_i)=\{3\pi/5,2\pi/3,2\pi/3\}
\]
  and
\[\mathcal{B}_2(x_i)=\{p,h,h\}, 
\]
where
\begin{equation}\label{second bonds}
p:=2b\sin(3\pi/10)\quad\textrm{and}\quad h:=\sqrt{a^2+b^2 +ab}.
\end{equation}  
%\end{gather}

Note that the {\it angular part} of the energy $E_\bfc$ given by 
$$c_{\rm angle} E_{\rm angle}+c_{\rm kink} E_{\rm kink}+c_{\rm dihedral} E_{\rm dihedral} $$
is constant over $\mathcal{X}$ so that the minimization of $E_\bfc$ on $\mathcal{X}$ reduces to the two-dimensional problem
\begin{align}\label{reduced problem}
\min_{a,b \in I_{\rm bond}} \big( E_{\rm bond}(X_{a,b}) + c_{\rm  nbd }
E_{\rm    nonbond }(X_{a,b}) \big).
\end{align}
 In Section \ref{convexity} we will show that the latter energy is convex with respect to $(a,b)$ whenever $\eta$ is chosen
small enough.

\begin{table}[h]\centering
\begin{tabular}{|c|c|c|c||c||c|}
    \hline
    $c_{\rm angle}$ & $c_{\rm kink}$ & $c_{\rm dihedral}$ & $c_{\rm  nbd }$
    & local minimizer & Ref. in Theorem \ref{maintheorem}\\
\hline\hline 
1& 0 & 0 & 0 & $X_{1,1}$ & Assertion 3.1  \\
    \hline
    0& 1 & 0 & 0 & $X_{1,1}$ & Assertion 3.1  \\
\hline
    1 &  0 & 1 & 0 & $X_{1,1}$ & Assertion  3.1 \\
\hline
     0 &  1 & 1 &  0& $X_{1,1}$ & Assertion  3.1  \\
\hline\hline

    0& 1 & 0 & 1 & $X_{a^*,b^*}$ & Assertion  3.2  \\
    
    \hline
    0& 1 & 1 & 1 & $X_{a^*,b^*}$ & Assertion 3.3  \\
    \hline
    1& 0 & 1 & 1 & $X_{a^*,b^*}$ & Assertion 3.3  \\
    \hline
\hline
    
    0& 0 & 0 & 0 & not in $\mathcal{X}$ & Assertion 4.1  \\
\hline
    0 & 0 & 0 & 1 & not in $\mathcal{X}$ & Assertion 4.2  \\
    \hline
    1& 0 & 0 & 1 & not in $\mathcal{X}$ & Assertion 4.2  \\
    \hline
  \end{tabular}
\vspace{3mm}
\caption{ Illustration of ten possible cases in  Theorem
  \ref{maintheorem}. The first two from the top are already considered
in \cite{Mainini-Stefanelli12,Stefanelli16}.  Cases 5-7 are
the core result of the paper: the fine geometry of $C_{60}$ with two
different bond lengths can be modeled by allowing  nonbonded
interactions  in combination with a kink-angle or a dihedral term. By providing explicit perturbations we will see that $X_{a^*,b^*}$ is not stable in the last three cases.}  \label{table}
\end{table}

Out tenet is that the analysis of the above two-dimensional problem
actually delivers information on the 180-dimensional problem $\min
E_\bfc$.  In particular, some  choices  of the vector $\bfc$ entail the
local stability of specific configurations in
$\mathcal{X}$ with respect to all (small) perturbations in $(\Rz^3)^{60}$.   

In order to investigate such stability, let us
introduce perturbations of configurations $X_{a,b}= \{x_1,\dots,x_{60}\} \in
\mathcal{X}$ with respect to  the  energy $E_\bfc$  as
 \begin{equation}\label{perturbations}
 \mathcal{P}(X_{a,b}):=\{\{x_1',\dots,x_{60}'\} \ :\ \textrm{$x_i':=x_i +\delta x_i$ with $|\delta x_i|<\delta_0$ for all $i=1,\dots,60$}\}\end{equation}
where $\delta_0>0$ is chosen to be small enough. In particular, we
choose $\delta_0$ so small that every $P = \lbrace x_1',\ldots, x_{60}' \rbrace \in\mathcal{P}(X_{a,b})$ is such that %we have that  satisfies the followings:%s characterized by the same \emph{topology} of $C_{a,b}$, i.e.  for every $x_i\in P$ we have that 
\begin{equation}
\label{perturbedbonds}\mathcal{B}_1(x_i')=\{a_i,b^1_i,b^2_i\}
\end{equation}
 with $a_i, b_i^1,b_i^2\in I_{\rm bond}$, 
\begin{equation*}
\label{perturbedangles}\mathcal{A}(x_i')=\{\theta_i,\varphi_i^1,\varphi_i^2\}\end{equation*}
 with $\theta_i\in \Theta_{\eps}$ and $\varphi_i^1,\varphi_i^2\in
 \Phi_{\eps}$ where $\Theta_{\eps}:=(3\pi/5-\eps,3\pi/5+\eps)$and  $
 \Phi_{\eps}:=(2\pi/3-\eps,2\pi/3+\eps)$ for every every $x_i'\in P$,
 and for $\eps$ small depending on $\delta_0$ and
\begin{equation}
\label{perturbed2bonds}\mathcal{B}_2(x_i')=\{p_i,h_i^1,h_i^2\}\end{equation}
  with $p_i, h_i^1, h_i^2 \in I_{\rm  nbd }$,
  where
  \begin{equation}
\label{perturbed2bonds2}
p_i = \sqrt{(b^1_i)^2 + (b_i^2)^2 - 2b^1_i b^2_i \cos(\theta_i)}\quad\textrm{and}\quad h_i^j = \sqrt{a_i^2 + (b_i^j)^2 - 2a_i b_i^j \cos(\varphi_i^j)}
\end{equation}
for $ j=1,2$ (see Figure \ref{topology}). 
In the following we say that $P$ and $X_{a,b}$ have the same \emph{geometry} if and only if $\mathcal{B}_1(x_i) =\mathcal{B}_1(x'_i)$,  $\mathcal{B}_2(x_i) = \mathcal{B}_2(x'_i)$, and  $\mathcal{A}(x_i) = \mathcal{A}(x'_i)$ for all $i=1,\ldots,60$.

  \begin{figure}[H] 
\begin{center}
\includegraphics[scale=1.2]{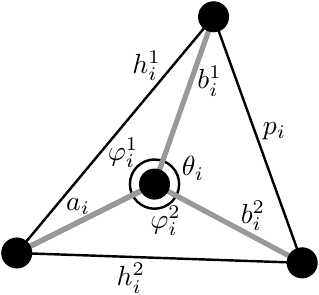}
\caption{Angles, bonds (thick gray lines), and second neighbors (thin black lines) at an atom $x'_i\in \mathcal{P}(X_{a,b})$.} 
\label{topology}
\end{center}
\end{figure}

We are now in the position of stating our main result. 
\begin{theorem}[Geometry of $C_{60}$]\label{maintheorem} Under the
  above assumptions on the potentials $v_{\rm bond}$, $v_{\rm angle}$, $v_{\rm kink}$,
  $v_{\rm dihedral}$, and $v_{\rm  nbd }$ the following holds:
 \begin{enumerate} 
\item[\rm 1.]
{\rm (Minimality in $\mathcal X$)}  For all $\bfc$ and for $\eta$ small enough there exists a
unique minimizer $X_{a^*,b^*}$ of $E_\bfc$ in $\mathcal{X}$. All
configurations in ${\mathcal X}$ different from $X_{a^*,b^*}$ are not
locally stable.  
 \item[\rm 2.] {\rm ( Nonbonded-interaction  effects)} If  $c_{\rm  nbd }=0$, then $a^* = b^*=1$. If $c_{\rm  nbd }=1$ and
\be{twobondscond}
\sqrt{3} v'_{\rm  nbd }(\sqrt{3})\neq  \sigma v'_{\rm  nbd }(\sigma) \ \ \text{with  $\sigma = 2\sin(3\pi/10)$,}
\ee
then  $a^*,b^* \le 1$, $a^* \neq b^*$, and ${\rm sgn}(b^* {-} a^*) = {\rm sgn}\big( \sqrt{3} v'_{\rm  nbd }(\sqrt{3}) {-} \sigma v'_{\rm  nbd }( \sigma)\big).$
 \item[\rm 3.] {\rm (Local stability)} Let $\eta$ be small enough.
\begin{enumerate}
 \item[\rm 3.1]  If $c_{\rm  nbd }=0$ and $c_{\rm angle}\vee  c_{\rm
     kink} = 1$, then $X_{1,1}$ is locally stable with respect to  perturbations in $\mathcal{P}(X_{1,1})$.
\item[\rm 3.2] If $c_{\rm kink}= c_{\rm  nbd }=1$ and $c_{\rm angle}=
  c_{\rm dihedral}=0$, then $X_{a^*,b^*}$  is locally stable with respect to  perturbations in $\mathcal{P}(X_{a^*,b^*})$.
\item[\rm 3.3] If $c_{\rm dihedral}= c_{\rm  nbd }=1$, and $c_{\rm angle}
  \vee c_{\rm kink}=1$, and $\eta'$ and $\eta/\eta'$ are small enough, the configuration $X_{a^*,b^*}$  is locally stable with respect to  perturbations in $\mathcal{P}(X_{a^*,b^*})$.
 \end{enumerate}
\item[\rm 4.] {\rm (Instability)} 
\begin{enumerate}
\item[\rm 4.1]  If $\bfc=(0,0,0,0)$, then there exists $P_1\in\mathcal{P}(X_{a^*,b^*})$ whose geometry differs from $X_{a^*,b^*}$ such that $E_\bfc(P_1)=E_\bfc(X_{a^*,b^*})$. 
%If $\bfc=(1,0,0,0)$ then there exists $P_2\in\mathcal{P}$ such that $E_\bfc(P_2)<E_\bfc(C^*_{60})$.\\
\item[\rm 4.2] If  $\bfc=(0,0,0,1)$ or $\bfc=(1,0,0,1)$  and $v_{\rm
    angle}'(2\pi/3) =0$, $v'_{\rm  nbd }(x)>0$ for $x \in I_{\rm  nbd }$, then there exists $P_2\in\mathcal{P}(X_{a^*,b^*})$ such that $E_\bfc(P_2)<E_\bfc(X_{a^*,b^*})$.
 \end{enumerate}
  \end{enumerate}
\end{theorem}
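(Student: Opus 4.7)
The plan is to treat the four assertions of Theorem~\ref{maintheorem} separately, exploiting the reduction to a convex planar problem on $\mathcal{X}$ for Assertions~1 and~2, carrying out a per-atom energy expansion for Assertion~3, and providing explicit deformations for Assertion~4.

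\emph{Assertions 1--2.} Since the angular part of $E_\bfc$ is constant on $\mathcal{X}$, the restriction of $E_\bfc$ to $\mathcal{X}$ reduces to the two-dimensional problem~\eqref{reduced problem}. The convexity of this planar energy (established in Section~\ref{convexity} for $\eta$ small enough) yields existence and uniqueness of $(a^*,b^*)$, and any other $X_{a,b}\in\mathcal{X}$ admits a strictly decreasing direction inside $\mathcal{X}\subset\mathcal{P}(X_{a,b})$, ruling out local stability. The case $c_{\rm nbd}=0$ of Assertion~2 is immediate from~\eqref{H2}, which forces $a^*=b^*=1$. For $c_{\rm nbd}=1$ I would compute the partial derivatives of $E_{\rm bond}+E_{\rm nonbond}$ at $(1,1)$ using~\eqref{second bonds}: the nonbonded contribution splits into a pentagonal part depending only on $b$ through $p=2b\sin(3\pi/10)$ and a hexagonal part depending on $(a,b)$ through $h=\sqrt{a^2+b^2+ab}$; the first-order system at $(1,1)$ balances $\sigma v'_{\rm nbd}(\sigma)$ against $\sqrt{3}\,v'_{\rm nbd}(\sqrt{3})$, so~\eqref{twobondscond} excludes $(1,1)$ as a critical point and the signs of those two quantities read off the sign of $b^*-a^*$.

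\emph{Assertion 3.} For $P\in\mathcal{P}(X_{a^*,b^*})$ I would use~\eqref{perturbedbonds}--\eqref{perturbed2bonds2} to rewrite $E_\bfc(P)$ as a sum of local contributions per atom in the variables $(a_i,b_i^1,b_i^2,\theta_i,\varphi_i^1,\varphi_i^2,p_i,h_i^1,h_i^2)$ and compare to $E_\bfc(X_{a^*,b^*})$ via a combination of convexity bounds and the geometric identities forced by closure of the cage. In case~3.1 the absence of $v_{\rm nbd}$ lets me expand the energy to second order around $X_{1,1}$ and use the strong convexities~\eqref{H2} and~\eqref{H3} to extract a positive definite quadratic form modulo the six-dimensional isometry kernel. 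Case~3.2 is subtler because $v_{\rm nbd}$ may decrease when second-neighbor distances shrink; here the kink condition~\eqref{kink} yields a linear $O(\delta_0)$ coercivity for every deviation of the hexagonal angles from $2\pi/3$, which dominates the smooth $O(\delta_0^2)$ destabilizing contribution of $v_{\rm nbd}$ along every transverse mode, while the tangent $\mathcal{X}$-direction is controlled by the convexity of~\eqref{reduced problem}. Case~3.3 replaces the kink by the dihedral condition~\eqref{v4hp}, which supplies a first-order penalty for closing the hexagonal angles below $2\pi/3$, precisely the mode exploited by the pentagon-inward perturbation of~4.2; taking $\eta/\eta'$ small ensures this dihedral penalty dominates the $O(\eta)$ gain of $v_{\rm nbd}$, and taking $\eta'$ small keeps the residual in-plane problem close to the strictly convex reduced problem of Assertion~1.

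\emph{Assertion 4.} Both instability statements are proved by explicit constructions (developed in Section~\ref{counterexample section}). For~4.1 with $\bfc=(0,0,0,0)$ only $E_{\rm bond}$ is active, so any $P_1$ with all bond lengths equal to~$1$ achieves the same energy as $X_{1,1}$; rotating a single pentagonal facet within its supporting plane, accompanied by the compensating mechanism allowed by the flexibility of the truncated-icosahedral bar-and-joint framework, produces such a $P_1$ whose second-neighbor geometry differs from $X_{1,1}$. For~4.2 I would translate the five vertices of one pentagonal facet radially toward the centre of the cage: the hypothesis $v'_{\rm angle}(2\pi/3)=0$ kills the first-order hexagonal-angular cost, the pentagonal angles remain at $3\pi/5$ (so $v_{\rm angle}(3\pi/5)$ contributes no first-order term), and the shrinking of the $p$-type second-neighbor distances strictly decreases $E_{\rm nonbond}$ at first order since $v'_{\rm nbd}>0$, yielding $E_\bfc(P_2)<E_\bfc(X_{a^*,b^*})$.

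The main obstacle, in my view, is case~3.3: one must juggle the three small parameters $\eta,\eta',\delta_0$ and simultaneously control the in-plane convexity inherited from~\eqref{reduced problem}, the out-of-plane coercivity provided by $E_{\rm dihedral}$ through~\eqref{v4hp}, and the linear destabilizing contribution of $\eta v'_{\rm nbd}$. A clean orthogonal decomposition of the $174$ perturbation degrees of freedom (modulo the six-dimensional isometry group) into a two-dimensional mode tangent to $\mathcal{X}$ and a transverse part, with quantitative bounds tuned to the regime $\eta\ll\eta'\ll 1$, seems to be the decisive ingredient.
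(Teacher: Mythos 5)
Your treatment of Assertions 1, 2 and 4.1 follows the paper's route (reduction to the strictly convex planar problem \eqref{reduced problem}, first-order conditions, rotation of a pentagonal facet keeping all bonds at length $1$); the only small imprecision there is that for $a^*\neq b^*$ you must exclude \emph{every} diagonal point $(d,d)$, $d\in I_{\rm bond}$, as a critical point, not only $(1,1)$, which the paper does by shrinking $I_{\rm bond}$. The serious gap is in Assertion 3. You propose a second-order expansion and a ``positive definite quadratic form modulo the six-dimensional isometry kernel,'' together with an orthogonal decomposition of the $174$ transverse modes; you yourself flag this as the ``decisive ingredient'' but do not supply it, and it is far from routine (the bond part alone has a large kernel, as the flex $P_1$ of Assertion 4.1 shows, and $v_{\rm kink}$ is not even differentiable at the reference angles). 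The paper avoids all of this with a different and much more elementary mechanism that is absent from your proposal: split the energy into per-atom contributions $\widehat E_\bfc(a_i,b_i^1,b_i^2,\theta_i,\varphi_i^1,\varphi_i^2)$, prove strict convexity of $\widehat E_\bfc$ in these six scalar variables (Proposition \ref{thmcxty}), apply Jensen's inequality twice to pass to the mean values $\bar a,\bar b,\bar\theta,\bar\varphi$, and then use the combinatorial facts that the five angles of a pentagon sum to at most $3\pi$ and the six angles of a hexagon to at most $4\pi$, so that $\bar\theta\le 3\pi/5$ and $\bar\varphi\le 2\pi/3$. This last point is essential because the kink and dihedral hypotheses \eqref{kink}, \eqref{v4hp} give a \emph{one-sided} linear penalty only for angles \emph{below} the reference values — your claim of linear coercivity ``for every deviation'' is not what \eqref{kink} provides. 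Your bookkeeping is also off: the destabilizing contribution of $v_{\rm nbd}$ is first order in the angle deviation (with coefficient $O(\eta)$), not $O(\delta_0^2)$; it is dominated because its coefficient is small, not because of its order.

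The mechanism you give for Assertion 4.2 is also incorrect. Shrinking the pentagon shrinks $b$ and hence $p=\sigma b$ together, and the resulting first-order gain in $v_{\rm nbd}(p)$ is exactly cancelled by the first-order increase in $v_{\rm bond}(b)$ (and the $\delta_b$-part of $v_{\rm nbd}(h)$) through the optimality condition $\tfrac{\rm d}{{\rm d}b}E_\eta(X_{a^*,b})|_{b=b^*}=0$ that defines $b^*$; a perturbation gaining only through $p$ would contribute nothing at first order. The actual source of the strict energy decrease in the paper is the nonplanarity of the five adjacent hexagons: the perturbation directions are chosen orthogonal to the $a$-bonds (not radial), the hexagons kink along the segments with endpoints in $\{x_6',\dots,x_{10}'\}$, the hexagonal angle sums strictly decrease ($\delta_\varphi+\delta_\varphi'<0$), and this shortens the $h$-type second-neighbor distances at no first-order angular cost since $v_{\rm angle}'(2\pi/3)=0$. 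Without identifying this term, the construction does not yield $E_\bfc(P_2)<E_\bfc(X_{a^*,b^*})$.
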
 

Depending on which term is active in $E_\bfc$, Theorem
\ref{maintheorem} asserts that
different situations may occur, see Table \ref{table}. By including
 nonbonded-interaction  effects into the picture ($c_{\rm  nbd }=1$) as well as a
 a kink-angular  or a dihedral term ($c_{\rm kink} \vee c_{\rm
   dihedral}=1$), the unique minimizer $X_{a^*,b^*}$ in $\mathcal X$ is locally
 stable and has two different bond lengths under the generic condition
 \eqref{twobondscond}. This corresponds to the actual geometry
 of the $C_{60}$ molecule and is our main result.  As already mentioned in the Introduction,  two distinct bond lengths  $a^* < b
^*$ are experimentally  observed. This is reflected in Assertion 2
 of Theorem \ref{maintheorem}. Indeed,  for  $v_{\rm  nbd }$
convex and increasing in $I_{\rm  nbd }$ we have that  $t \mapsto
t v_{\rm  nbd }'(t)$  is increasing as well, so that $a^* < b
^*$ .
 
If  nonbonded interactions  are neglected ($c_{\rm  nbd }=0$) and either angle term is present, namely $c_{\rm angle}=1$ or
$c_{\rm kink}=1$, the configuration $X_{1,1}$ is stable instead. These cases, already addressed
in \cite{Mainini-Stefanelli12,Stefanelli16}, are unsatisfactory as
they fail to deliver the correct geometry of $C_{60}$, featuring 
indeed two different bond lengths. This  shortcoming  was the main motivation for
the present study.  Let us however stress that the extension of the
argument of \cite{Mainini-Stefanelli12,Stefanelli16} to the case of
nonbonded interactions is nontrivial, as commented in Section~\ref{stability section} below. 

Finally, 
by neglecting both kink-angular and dihedral terms ($c_{\rm kink} = c_{\rm
   dihedral}=0$) no icosahedral configuration in $\mathcal X$ is
 locally stable.  Indeed, we provide an explicit perturbation
   $P_2$ lowering the energy, which consists in simultaneously moving
   the vertices of a pentagonal facet towards the center of the cage
    so to reduce  the length of second neighbors.

Our result focuses on the case where  $\eta$ and $\eta'$ are small,
reflecting indeed that the terms of $E_{\rm  nonbond }$ and $E_{\rm
  dihedral}$  can be supposed to be of lower order with respect
to the two- and three-body part of the energy
\cite{Allinger}. In case both dihedral and  nonbonded-interaction  terms are
present (Assertion 3.3  of Theorem \ref{maintheorem}) we additionally assume $\eta/\eta'$ to be
small, namely that the dihedral term dominates, which
again is well-motivated by the basic chemistry of covalent
bonding in carbon. If this is not the case, for $\bfc =(1,0,1,1)$ the same perturbation $P_2$
of Assertion 4.2  of Theorem \ref{maintheorem}  proves that the the
configuration $X_{a^*,b^*}$ is not locally
stable.

\begin{remark}\label{remark0}
An alternative,
equivalent approach would have been that of considering a single
$v_{\rm two\text{-}body}:=[0,\infty)\to[-1,\infty)$ for all two-body effects,
namely for both first- and second-neighbors, instead of using the two potentials
$v_{\rm bond}$ and $v_{\rm  nbd }$. More precisely, one could introduce
the energy term ${E}_{\rm two\text{-}body}$ defined by 
\begin{equation}\label{twobodytogether}
{E}_{\rm two\text{-}body}(X):=\frac12\sum_{(i,j)\in\,\mathcal{N}_1\cup\mathcal{N}_2}v_{\rm two\text{-}body}(|x_i-x_j|).
\end{equation}
Then, the statements of Theorem \ref{maintheorem} with $c_{\rm  nbd }=1$ can be reformulated
by replacing  ${E}_{\rm bond} +  {E}_{\rm  nonbond }$ by  ${E}_{\rm two\text{-}body}$ and letting 
$$v_{\rm two\text{-}body} =\psi_{\rm bond} v_{\rm bond} + \eta \psi_{\rm  nbd }
v_{\rm  nbd }$$
 where  $\psi_{\rm bond}$ and $\psi_{\rm  nbd }$ are suitable cut-off functions
 supported in $I_{\rm bond}$ and $I_{\rm  nbd }$, respectively. This
 approach would in particular allow to take $v_{\rm two\text{-}body}$
 to have the Lennard-Jones form $v_{\rm two\text{-}body}(r) = kr^{-12} -
 k'r^{-6}$ for suitable positive constants $k$ and $k'$. We however prefer to
 keep our notation as we believe that it delivers a clearer argument.
 \end{remark}
 
\begin{remark}\label{remark} The reference to $sp^2$
  hybridization, that is the assumption that $v_{\rm angle}$ is
  minimized solely at $2\pi/3$ and $4\pi/3$, is here chosen for
  definiteness only. This assumption could be weakened in order to
  encompass more general bonding regimes. In particular, potentials with
  negative slope at $2\pi/3$, such as Brenner-like potentials favoring
  $\pi$ bond angles (see \cite{Brenner90}), could be considered as
  well. In this case, the results correspond to the ones of Theorem
  \ref{maintheorem} along with the choice  $c_{\rm kink} =1$. Moreover, the
  case of $v_3$ having a (small) positive slope at $2\pi/3$ could be
  addressed as well.  Depending on the contribution of $E_{\rm
    dihedral}$ we either get a stability or an instability result as
  in Assertion 3. and Assertion 4. of Theorem
  \ref{maintheorem}. We prefer to present the results under the
  slightly more restrictive assumptions of Section
  \ref{model} for they allow a clearer exposition.

 \end{remark}

\subsection{ Examples of admissible potentials}\label{discussion} 
The assumption frame of Subsection
\ref{model} is sufficiently weak to include virtually all  the specific choices for the potentials which have been
  introduced in the literature \cite{Allinger}. The aim of this subsection is to illustrate
  some concrete examples. In the following, we shall use the indexed symbol $k$ to
  indicate different positive parameters. 

Let us start by observing that the classical potentials
\begin{align*} 
  \text{(harmonic)}\qquad & \frac12\,k_h(r{-}1)^2 -1,\\
\text{(Morse)} \qquad& k_M\left({\rm e}^{-k_M'(r{-}1)}-1\right)^2 -1,\\
\text{(Lennard-Jones)} \qquad&
 \frac{1}{r^{12}}-\frac{2}{r^6}  
\end{align*} 
fulfill assumption \eqref{H2} for $v_{\rm bond}$. The Morse and the
Lennard-Jones potentials, possibly modulated by suitable additional parameters, can
give account of nonbonded interactions as well \cite{Hill,
  Westheimer-Mayer}. In particular, they can be chosen as $v_{\rm nbd}$ and calibrated in
such a way that assumption \eqref{H2s} can be met. In addition, nonbonded interactions
can be described by the classical potential \cite{Hill2}
$$  \text{(Buckingham)} \qquad k_{B}{\rm e}^{-k_{B}'r} -\frac{k''_{B}}{r^6}$$
which again fulfills \eqref{H2s} for a suitable choice of the
coefficients.  Combinations of these potentials can also be considered
in order to model two-body interactions in the spirit of
\eqref{twobodytogether}. 
 
The angle potential $v_{\rm angle}$ is usually defined to be quadratic
around $2\pi/3$ and $4\pi/3$, which ideally fits with assumption
\eqref{H3} but not with \eqref{kink}. Note however that the latter, as
already commented above,  
has no direct chemical justification.
 
Various different formulations for the dihedral term appear in the
literature.  We mention the Molecular Dynamics computational libraries AMBER
\cite{Weiner81}, CHARMM \cite{Brooks83},
GROMOS \cite{Gunsteren87}, Tripos 5.2 \cite{Clark89},  DREIDING \cite{Mayo90}, and  AIREBO \cite{stuart-etal} and refer also to \cite[Subsection 3.2.2]{RC} for
the detailed geometric account of different choices. 
 
In AMBER the contribution of the atom $x_i$ with bonds $\mathcal{B}_1(x_i)=\{a_i,b_i^2,b_i^2\}$ to the dihedral term $v_{\rm dihedral}$ is
\[
v_{\rm AMBER}(\gamma_i):=k_{\rm AMBER}\left[1-\cos\left(3(\gamma_i-\pi)\right)\right],
\]
where $\gamma_i$ is the angle formed by the two planes $\pi_i^j$, containing the bonds with lengths $a_i$ and $b_i^j$, respectively for $j=1,2$ \cite{RC} (see Figure \ref{inversiongamma}).

%For example the dihedral term $v_{\rm dihedral}$ used in AMBER is defined at each atom $x_i$ that presents three bonds $\mathcal{B}_1(x_i)=\{a_i,b_i^2,b_i^2\}$ as proportional to the term
%by consider one of the three bond at that atom as being special and of $\mathcal{B}_1$ proportional to
%\[
%1-\cos\{3(\gamma_i-\pi)\},
%\]
%where the angle $\gamma_i$ is the angle formed by the two planes $\pi_i^j$ for $j=1,2$, with  $\pi_i^j$ being the plane containing $a_i$ and $b_i^j$ \cite{RC} (see Figure \ref{inversiongamma}).

\begin{figure}[!hpt] 
\begin{center}
\includegraphics[scale=1.2]{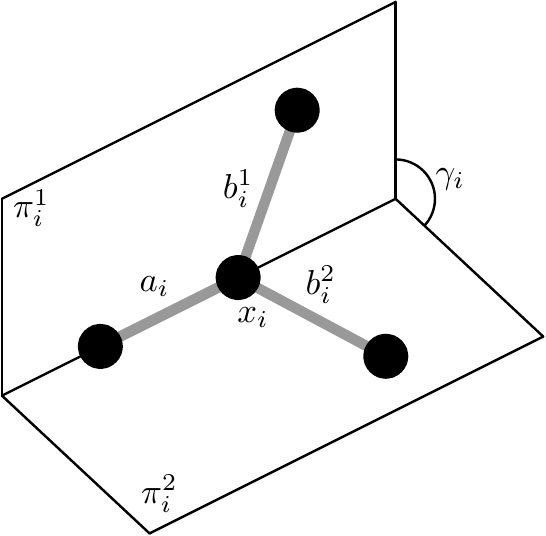}
 \caption{The dihedral potential employed in AMBER depends at every atom $x_i$ on the corresponding angle $\gamma_i$ shown in the picture.}
\label{inversiongamma}
\end{center}
\end{figure} 
In \cite{haddon,Shen-Li} yet another
definition for the dihedral term is introduced. This is based on the
notion of $\pi$-orbital axis vector (POAV), namely the axis
which forms equal angles with the three covalent bonds centered at a given
atom (see \cite[Appendix]{haddon-scott} for a detailed
definition). Both the AMBER and the POAV dihedral terms can be
proved to satisfy  assumption~\eqref{v4hp}.  For example, in the AMBER case we observe that if  $\varphi_i^j=\varphi_i$ for $j=1,2$, then
\[
\sin\frac{\gamma_i}{2}\sin\varphi_i=\sin\frac{\theta_i}{2}
\]
as computed in \cite[Proposition 3.2]{MMPS1}, and hence, assumption \eqref{v4hp} follows from  verifying that 
\[
\frac{\rm d }{{\rm d} \gamma} v_{\rm AMBER}(\gamma)\Big|_{\gamma=\gamma_0}<0
\]
for $\gamma_0\in( 2\pi/3,\pi)$ such that
\[
\sin\frac{\gamma_0}{2}\sin\frac{2}{3}\pi=\sin\frac{3}{10}\pi.
\]

\section{The geometry of $C_{60}$}\label{convexity}
In this section we  specify  the geometry of  $C_{60}$  by
minimizing the energy in the class of icosahedral configurations $\mathcal{X}$. In particular, we
prove    the first and the second assertion of Theorem
\ref{maintheorem}.

\begin{proof}[Proof of Assertion 1 of Theorem \ref{maintheorem}]
For every vector $\bfc$ the energy  $E_\bfc(X_{a,b})$ coincides (up to an additive constant not depending on $a,b \in I_{\rm bond}$) with  (recall \eqref{second bonds} and \eqref{reduced problem}) 
\begin{align}
E_{\eta}(X_{a,b}) &:= 30 v_{\rm bond}(a) + 60 v_{\rm bond}(b) + c_{\rm  nbd } \eta\big(60  v_{\rm  nbd }(2b\sin(3\pi/10))\notag \\& \ \ \ \ + 120  v_{\rm  nbd }(\sqrt{a^2 + b^2+ab})\big)\notag\\
&  = 30 v_{\rm bond}(a) + 60 v_{\rm bond}(b) + \eta f(a,b)\label{E'}
\end{align}
 for a suitable function $f$ being $C^2$ on $I_{\rm  nbd }
 \times I_{\rm  nbd }$  (see below \eqref{H2s}).  We
 first observe that $E_{\eta}(X_{a,b})$ is strictly convex as a
 function of $a,b \in I_2$.  Indeed, the Hessian reads as $ \DD^2
  E_{\eta}(X_{a,b}) = 30 v_{\rm bond}''(a) \ e_1 \otimes e_1 +
 60 v_{\rm bond}''(b) \ e_2 \otimes e_2 + \eta  \DD^2  f(a,b)$ and the assertion follows from \eqref{H2} for $\eta$ small enough.

Consequently, for  such  $\eta$ small enough  there exist
 a unique minimizer $X_{a^*_{\eta},b^*_{\eta}}$ of $E_{\eta}$ 
  in $\overline{I}_{\rm bond} \times \overline{I}_{\rm bond}$. We
observe that for small $\eta$ we have $a^*_{\eta}, b^*_{\eta} \in 
  I_{\rm bond}$. In fact, as $E_{\eta}$ is a continuous perturbation
of $E_0$, one has $(a^*_{\eta}, b^*_{\eta}) \to (a^*_0,b^*_0)$ as
$\eta \to 0$ and $(a^*_0,b^*_0) = (1,1)$ by \eqref{H2}. Consequently,
$X_{a^*_{\eta},b^*_{\eta}}$ is also the unique  minimizer  of $E_\bfc$
over the family $\mathcal{X}$. For given fixed $\eta$ we drop the
subscript $\eta$  and indicate the minimizer as   the
minimizer $(a^*,b^*)$,  for the sake of notational simplicity.  Note that the strict convexity and $a^*, b^* \in  I_{\rm bond}$ imply
\begin{align}\label{firstorder}
\text{$ \DD  E_{\eta}(X_{a,b}) = 0$ for $a,b \in  I_{\rm bond}$ if and only if $(a,b) = (a^*,b^*)$}.
\end{align}
To conclude the proof of Assertion 1  of Theorem \ref{maintheorem},
 it remains to show that $X_{a,b} =\lbrace x_1,\ldots,x_{60}
\rbrace \in \mathcal{X} \setminus \lbrace X_{a^*,b^*} \rbrace$ is not
locally stable for small perturbations. As $\DD 
E_{\eta}(X_{a,b}) \neq 0$ by \eqref{firstorder}, we find $a',b' \in
 I_{\rm bond}$ with $|a-a'|$, $|b-b'|$ arbitrarily small such
that $E_\bfc(X_{a',b'}) < E_\bfc(X_{a,b})$. It now suffices to observe
that  $X_{a',b'}$ can be realized by a configuration $\lbrace
x_1',\ldots,x'_{60} \rbrace  \in \mathcal{X}$  with $|x_i -
x_i'| < \delta_0$ for $i=1,\ldots, 60$.  Indeed,  this
  corresponds to moving  the facets of the pentagons and hexagons
  (infinitesimally) inwardly or outwardly without changing the 
  bond  angles.
\end{proof}

  In the following let  $a^*, b^* \in I_{\rm bond}$   be the length of the bonds of $X_{a^*,b^*}$, and denote by  
\begin{align}\label{ph60}
p^*:=2b^*\sin(3\pi/10)= b^* \sigma, \ \ \ \ h^*:=\sqrt{(a^*)^2+(b^*)^2 +a^*b^*}
\end{align}
the lengths between its second neighbors, where $\sigma = 2\sin(3\pi/10)$.  We now prove the second assertion of Theorem \ref{maintheorem}.
 
 \begin{proof}[Proof of Assertion 2 of Theorem \ref{maintheorem}]
First, if $c_{\rm  nbd }=0$ or, equivalently,  $\eta=0$ in
$E_{\eta}$ (see \eqref{E'}), we have already proved in
\eqref{firstorder} that $(a^*,b^*) = (1,1)$. Suppose now that $c_{\rm
   nbd }=1$ and $\eta>0$.  By computing  the
derivative of $E_{\eta}(X_{a,b})$  we  obtain 
\begin{align}\label{firstorder2}
& \DD  E_{\eta}(X_{a,b}) = \Big(30 v_{\rm bond}'(a) + 120\eta \displaystyle\dfrac{2a+b}{2h} v_{\rm  nbd }'(h),\\ & \quad\quad\quad\quad\quad\quad\quad\quad\quad 60 v_{\rm bond}'(b) + 60\eta \sigma  v'_{\rm  nbd }(\sigma b)+  120\eta \dfrac{a+2b}{2h} v_{\rm  nbd }'(h) \Big),\nonumber
\end{align}  
where, similarly as above, we have set $h = \sqrt{a^2+b^2+ab}$. 
  Since by the assumption on $v_{\rm  nbd }$ (see below
  \eqref{H2s})  we have $0 \le v'_{\rm  nbd }(p^*),  v_{\rm
   nbd }'(h^*)$, \eqref{firstorder} implies  that  $$v_{\rm bond}'(a^*), v_{\rm bond}'(b^*) \le 0$$ and thus $a^*,b^* \le 1$ by  \eqref{H2}.

Assume that \eqref{twobondscond} holds  and observe that this implies
$\sqrt{3} v'_{\rm  nbd }(\sqrt{3} d) \neq  \sigma v'_{\rm 
  nbd }(\sigma d)$ for all $d\in I_{\rm bond}$ if we choose the
neighborhood $I_{\rm bond}$ small enough (depending on $v_{\rm 
  nbd }$). Now suppose $(a^*,b^*) = (d,d)$ for some $d \in I_{\rm
  bond}$. Then \eqref{firstorder} yields $ \DD  E_{\eta}(X_{d,d}) = 0$ and thus by the previous computation we get
\begin{align*}
&30v_{\rm bond}'(d) + 120\eta \dfrac{\sqrt{3}}{2}v_{\rm  nbd
  }'(\sqrt{3}d) \\
&= 60v_{\rm bond}'(d)  + 60\eta \sigma  v'_{\rm
   nbd }(\sigma d)+ 120\eta \dfrac{\sqrt{3}}{2}v_{\rm  nbd
  }'(\sqrt{3}d) = 0,
\end{align*}
which leads to  $\sqrt{3} v'_{\rm  nbd }(\sqrt{3} d) = \sigma
v'_{\rm  nbd }(\sigma d)$. This contradicts assumption
\eqref{twobondscond} and  eventually   shows  that 
$a^* \neq b^*$. Finally, again  by   using the first order optimality condition we derive  
$$v_{\rm bond}'(b^*) - v_{\rm bond}'(a^*) = 3 \eta \dfrac{a^*}{h^*}v_{\rm  nbd }'(h^*)- \eta \sigma v'_{\rm  nbd }(\sigma b^*).$$
As $v'_{\rm bond}$ is increasing  on $I_{\rm bond}$ by \eqref{H2}, we get 
$${\rm sgn} (b^*{-} a^*) = {\rm sgn} \Big( 3 \dfrac{a^*}{h^*}v_{\rm  nbd }'(h^*) - \sigma v'_{\rm  nbd }(\sigma b^*)\Big).$$
Note that we can assume that $|a^*-1|$, $|b^* -1|$, $|h^* - \sqrt{3}|$
are arbitrarily small by simply choosing the neighborhood $I_{\rm
  bond}$ sufficiently small. Consequently, by the regularity of the
potentials the term on the  above right-hand  side has the same sign as $  \sqrt{3} v'_{\rm  nbd }(\sqrt{3}) - \sigma v'_{\rm  nbd }( \sigma)$. This concludes the proof. 
\end{proof}

\section{Stability of $C_{60}$  with  the kink or
  the dihedral term}\label{stability section}

The section is devoted to the proof of the stability results  of
  Assertion 3 of Theorem \ref{maintheorem}. We  follow the general
strategy proposed in \cite[Theorem 7.3]{Mainini-Stefanelli12} which is
based on convexity and monotonicity arguments for the energy
$E_\bfc$. In our context, however, a more elaborated analysis of the
properties of the  phenomenological energy $E_\bfc$ is 
required. Indeed, the original argument in
\cite{Mainini-Stefanelli12} is based on
the possibility of treating bonded and angle effects {\it
  separately}. This is here not possible, as both first-neighbor bond lengths and
angles contribute to the length of second neighbors, hence to the term
$E_{\rm nonbonded}$. The strategy is hence to exploit the smallness of
$E_{\rm nonbonded}$, that is of $\eta$, in order to keep this
additional intricacy under control.

Let us start by rewriting  the energy corresponding to a small   perturbation $P = \lbrace x_1', \ldots, x'_{60} \rbrace \in \mathcal{P}(X_{a^*,b^*})$ (see \eqref{perturbations}) as a sum
$$
E_\bfc(P)=\sum_{i=1}^{60}\widehat{E}_\bfc(x'_i)
$$ 
where $\widehat{E}_\bfc(x_i')$ is the energy contribution associated to a single atom. In view of \eqref{perturbedbonds}-\eqref{perturbed2bonds2} each $\widehat{E}_\bfc(x'_i)$ can be expressed as a function in terms of the covalent bonds and the angles, i.e. with a slight abuse of notation $\widehat{E}_{\bfc}(a_i,b_i^1,b_i^2,\theta_i,\varphi^1_i,\varphi^2_i)= \widehat{E}_\bfc(x_i)$, and is defined by (set $y_i= (a_i,b_i^1,b_i^2,\theta_i,\varphi^1_i,\varphi^2_i)$ for shorthand)
\begin{align} 
\widehat{E}_\bfc(y_i) & :=\widehat{E}_{\rm bond}(y_i) + c_{\rm
  angle}\widehat{E}_{\rm angle}(y_i) +c_{\rm kink}\widehat{E}_{\rm
  kink}(y_i) \nonumber\\& \ \ \ \ +\eta'\, c_{\rm
  dihedral}\widehat{E}_{\rm dihedral}(y_i) + \eta\,c_{\rm  nbd
  }\widehat{E}_{\rm  nonbond }(y_i) \label{energysplit} 
\end{align}
for every $x_i\in P$, where we have 
\begin{align*}
&\widehat{E}_{\rm bond}(y_i):=\dfrac{1}{2}v_{\rm bond}(a_i)+\dfrac{1}{2}v_{\rm bond}(b^1_i)+\dfrac{1}{2}v_{\rm bond}(b^2_i),  \\ 
&\widehat{E}_{\rm angle}(y_i):=v_{\rm angle}(\theta_i)+v_{\rm angle}(\f^1_i)+v_{\rm angle}(\f^2_i), \\& \widehat{E}_{\rm kink}(y_i):=v_{\rm kink}(\theta_i)+v_{\rm kink}(\f^1_i)+v_{\rm kink}(\f^2_i),\\
&\widehat{E}_{\rm dihedral}(y_i):=v_{\rm dihedral}(\theta_i,\varphi_i^1,\varphi_i^2),\quad\textrm{and}\\
& \widehat{E}_{\rm  nonbond }(y_i):=  v_{\rm  nbd }(p_i)+  v_{\rm  nbd }(h^1_i)+   v_{\rm  nbd }(h^2_i).
\end{align*}
 Note that the factor $1/2$ in the definition of $\widehat{E}_{\rm bond}$
 takes into account the fact that each bond is shared by two atoms. We first use the convexity properties in \eqref{H2} and \eqref{H3} to show the convexity of $\widehat{E}_\bfc$.

\begin{proposition}[ Strict Convexity  of $\widehat{E}_c$]\label{thmcxty}
If   $\eta$, $\eta'$ are taken small enough, then for every vector  $\bfc \in \lbrace 0,1\rbrace^4$ with $ c_{\rm angle} \vee  c_{\rm kink}=1$  the energy $\widehat{E}_\bfc$ is strictly convex on $I_{\rm bond}^3 \times I_{\rm angle}^3$.  If $c_{\rm  nbd } = 0$ or  $c_{\rm dihedral} = 0$, the choice of $\eta$ or $\eta'$, respectively, is arbitrary. \end{proposition}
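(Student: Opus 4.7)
The plan is to split $\widehat{E}_\bfc$ into a ``dominant'' part $\widehat{E}_{\rm bond}+c_{\rm angle}\widehat{E}_{\rm angle}+c_{\rm kink}\widehat{E}_{\rm kink}$, for which \eqref{H2} and \eqref{H3} already provide strong convexity with a modulus $\mu>0$ that is independent of the perturbation parameters, and a ``small'' part $\eta'c_{\rm dihedral}\widehat{E}_{\rm dihedral}+\eta c_{\rm nbd}\widehat{E}_{\rm nonbond}$ which is smooth on the compact region of interest and therefore has a uniformly bounded Hessian. Strict convexity of the total will follow once $\eta,\eta'$ are chosen small enough so that the smooth perturbation cannot overcome $\mu$.

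First I would observe that $\widehat{E}_{\rm bond}$ depends only on the bond-length block $(a_i,b_i^1,b_i^2)\in I_{\rm bond}^3$, while $\widehat{E}_{\rm angle}$ and $\widehat{E}_{\rm kink}$ depend only on the angle block $(\theta_i,\varphi_i^1,\varphi_i^2)\in I_{\rm angle}^3$. By \eqref{H2} there exists $m_1>0$ with $v_{\rm bond}''\ge m_1$ on $\overline{I}_{\rm bond}$, so $\widehat{E}_{\rm bond}$ is strongly convex in the bond variables with modulus $m_1/2$; by \eqref{H3}, assumed for both $v_{\rm angle}$ and $v_{\rm kink}$, the angle term is $2\lambda_{\rm angle}$-strongly convex in the angle variables. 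Since these two terms act on disjoint coordinate blocks and $c_{\rm angle}\vee c_{\rm kink}=1$, their sum is strongly convex on $I_{\rm bond}^3\times I_{\rm angle}^3$ with modulus $\mu:=\min\{m_1/2,\,2\lambda_{\rm angle}\}>0$. Next, since $v_{\rm dihedral}$ is smooth, $\widehat{E}_{\rm dihedral}$ has a Hessian bounded by some $M_1$ on $\overline{I}_{\rm angle}^3$; and since, by \eqref{perturbed2bonds2}, the second-neighbor distances $p_i,h_i^1,h_i^2$ are smooth functions of $(a_i,b_i^1,b_i^2,\theta_i,\varphi_i^1,\varphi_i^2)$ taking values in $I_{\rm nbd}$ (provided $I_{\rm bond}$ and $I_{\rm angle}$ are chosen small enough around the reference lengths and angles), $\widehat{E}_{\rm nonbond}$ also has a Hessian bounded by some $M_2$. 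Hence the Hessian of $\widehat{E}_\bfc$ is at least $(\mu-\eta'c_{\rm dihedral}M_1-\eta c_{\rm nbd}M_2)$ times the identity, which is positive definite for $\eta,\eta'$ sufficiently small; when $c_{\rm dihedral}=0$ or $c_{\rm nbd}=0$ the corresponding restriction is vacuous, as claimed.

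The one step requiring genuine care is the case $c_{\rm kink}=1$, since $v_{\rm kink}$ is only assumed $\lambda_{\rm angle}$-convex on $I_{\rm angle}$ and need not be $C^2$ near $2\pi/3$ or $4\pi/3$, so the Hessian bookkeeping above must be replaced by a direct convexity estimate. I would use the fact that $\lambda_{\rm angle}$-convexity of $v_{\rm kink}$ is the purely functional statement that for all $\theta_0,\theta_1\in I_{\rm angle}$ and $t\in[0,1]$,
\begin{equation*}
v_{\rm kink}(t\theta_1+(1-t)\theta_0)\le t v_{\rm kink}(\theta_1)+(1-t)v_{\rm kink}(\theta_0)-\lambda_{\rm angle}t(1-t)(\theta_1-\theta_0)^2,
\end{equation*}
and combine it with the analogous secant inequalities coming from the Hessian bounds for the smooth bond, dihedral, and nonbonded contributions — whose combined perturbative error on the right-hand side is at most $(\eta'c_{\rm dihedral}M_1+\eta c_{\rm nbd}M_2)\,t(1-t)|y_1-y_0|^2$ — to obtain a strict convexity inequality for $\widehat{E}_\bfc$ on $I_{\rm bond}^3\times I_{\rm angle}^3$ as soon as $\eta,\eta'$ are sufficiently small.
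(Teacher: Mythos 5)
Your proposal is correct and follows essentially the same route as the paper: decompose $\widehat{E}_\bfc$ into the strongly convex bond part (acting on the length block), the strongly convex angle/kink part (acting on the angle block, via the secant form of \eqref{H3} so that no $C^2$ regularity of $v_{\rm kink}$ is needed), and the smooth $\eta$-, $\eta'$-weighted remainder with uniformly bounded Hessian, then absorb the latter for $\eta,\eta'$ small. The only cosmetic difference is that you bound the dihedral and nonbonded Hessians separately ($M_1$, $M_2$) where the paper uses a single lower eigenvalue bound $\lambda^*$ for their sum; the conclusion and the treatment of the cases $c_{\rm nbd}=0$ or $c_{\rm dihedral}=0$ are identical.
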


\begin{proof}
We split $\widehat{E}_\bfc = f_1+f_2 + f_3$ into the parts $f_1 =
\widehat{E}_{\rm bond}$, $f_2 = \eta \, c_{\rm  nbd }
\widehat{E}_{\rm  nonbond } + \eta' \,c_{\rm dihedral}
\widehat{E}_{\rm dihedral}$ and $f_3 =   c_{\rm angle}\widehat{E}_{\rm
  angle} + c_{\rm kink} \widehat{E}_{\rm kink}$.   We consider two
points $y_1, y_2  \in I_{\rm bond}^3 \times I_{\rm angle}^3$ and
distinguish the bond and angle part by writing $y_i = (y^1_i,y_i^2)$
for $i=1,2$ with $y^1_i,y^2_i \in \Rz^3$. We let $ \lambda_{\rm
    bond} = \min_{\ell  \in \overline{I}_{\rm bond}} v_{\rm
  bond}''(\ell)/2$ and by the smoothness of $\widehat{E}_{\rm 
  nonbond }$ and $\widehat{E}_{\rm dihedral}$ we can choose $
  \lambda^* \in \Rz$ such that for each $y \in I_{\rm bond}^3 \times
I^3_{\rm angle}$ the smallest eigenvalue of $ \DD^2  \widehat{E}_{\rm  nonbond }(y)$ and $ \DD^2  \widehat{E}_{\rm dihedral}(y)$ is larger than $\lambda^*$. It is a well known  fact in the theory of convex functions that this implies for $t \in [0,1]$ 
\begin{align*}
f_1(ty_1 + (1{-}t)y_2)&\le t f_1(y_1) + (1{-}t) f_1(y_2) - \dfrac{1}{2}\lambda_{\rm bond} t(1{-}t)|y_1^1 {-} y_2^1|^2, \\ 
f_2(ty_1 + (1{-}t)y_2) &\le t f_2(y_1) + (1{-}t) f_2(y_2) \\
&+ \dfrac{1}{2}
|\lambda^*| (\eta c_{\rm  nbd } {+} \eta'c_{\rm dihedral}) t(1{-}t)|y_1 {-} y_2|^2.
\end{align*}
Moreover, by the strong convexity of $f_3$ in the angle variables (see \eqref{H3}) we have for $t \in [0,1]$ 
$$f_3(ty_1 + (1{-}t)y_2)\le t f_3(y_1) + (1{-}t) f_3(y_2) - \lambda_{\rm angle} t(1{-}t)|y_1^2 {-} y_2^2|^2.$$
Thus, since $\lambda_{\rm bond} >0$ by \eqref{H2} and $\lambda_{\rm angle}>0$, we derive for $\eta,\eta'$ small enough that $$\min\left\{ \frac{1}{2}\lambda_{\rm bond}, \lambda_{\rm angle} \right\}  - \frac{1}{2} |\lambda^*| (\eta c_{\rm  nbd } + \eta'c_{\rm dihedral}) > 0$$  and conclude that $\widehat{E}_\bfc$ is strictly convex.   \end{proof}

We now derive a monotonicity property which can be recovered from
the kink assumption \eqref{kink} or from assumption \eqref{v4hp}. Note
that in both cases the argument is based on the fact that  the
planarity of the faces is energetically favored by   $\widehat{E}_{\rm kink}$ or $\widehat{E}_{\rm dihedral}$.

\begin{proposition}[ Monotonicity  of $\widehat{E}_c$]\label{thmmon}
Assume   $\eta$ is small enough. If 
\begin{align*}
{\rm 1. } \ \ & \text{$\bfc = (1,0,0,0)$ or} \\
{\rm 2. } \ \ &  \text{$\bfc\in \lbrace (0,1,0,1), (0,1,0,0)\rbrace$ or} \\
{\rm 3. } \ \ &  \text{$\bfc \in \lbrace (1,0,1,1), (0,1,1,1), (1,0,1,0), (0,1,1,0) \rbrace$ and  $\eta/\eta'$ small enough,}  
\end{align*}
then we have for all $a,b \in I_{\rm bond}$, $\theta \in \Theta_\eps$,  $\varphi \in \Phi_\eps$ with $\theta \le 3\pi/5$ and $\varphi \le 2\pi/3$
\begin{align}\label{monassertion}
\widehat{E}_\bfc(a,b,b,\theta,\varphi,\varphi) \ge \widehat{E}_\bfc(a^*,b^*,b^*,3\pi/5,2\pi/3,2\pi/3),
\end{align}
where equality only holds if $a=a^*$, $b=b^*$, $\theta = 3\pi/5$ and $\varphi=2\pi/3$. 
   \end{proposition}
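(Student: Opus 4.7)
The plan is to reduce the six-variable inequality \eqref{monassertion} to a constrained four-variable convex minimization and verify first-order optimality at the candidate target. Set
\[ g(a,b,\theta,\varphi) := \widehat{E}_\bfc(a,b,b,\theta,\varphi,\varphi), \qquad y^* := (a^*,b^*,3\pi/5,2\pi/3), \]
and let $K$ denote the closed convex subset of $I_{\rm bond}^2 \times \Theta_\eps \times \Phi_\eps$ carved out by $\theta \le 3\pi/5$ and $\varphi \le 2\pi/3$. Then \eqref{monassertion} is equivalent to $y^*$ being the unique minimizer of $g$ on $K$. Since composition with an affine map preserves strict convexity, Proposition \ref{thmcxty} yields strict convexity of $g$ on $K$; uniqueness of the minimizer will then follow from the exhibition of a subgradient of $g$ at $y^*$ whose bond components vanish and whose $\theta$- and $\varphi$-components are non-positive.

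The bond components are handled uniformly across all three cases. Summing the single-atom energy over the $60$ equivalent atoms of $X_{a,b}$ gives the identity $E_\bfc(X_{a,b}) = 60\, g(a,b,3\pi/5,2\pi/3)$, and Assertion 1 of Theorem \ref{maintheorem} together with \eqref{firstorder} says that $(a^*,b^*)$ is an interior minimizer of the right-hand side on $I_{\rm bond}^2$. Hence $\partial_a g(y^*) = \partial_b g(y^*) = 0$, independently of the remaining smallness parameters.

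The angle components are the heart of the matter and are treated case by case. In Case 1 one has $\partial_\theta g(y^*) = v'_{\rm angle}(3\pi/5)<0$ because $v_{\rm angle}$ is strongly convex on $I_{\rm angle}$ with minimum at $2\pi/3$, while $\partial_\varphi g(y^*) = 2 v'_{\rm angle}(2\pi/3) = 0$; both signs are admissible. In Case 2 the corresponding quantities become $v'_{\rm kink}(3\pi/5)$ and the left-derivative $2 \lim_{\varphi\uparrow 2\pi/3} v'_{\rm kink}(\varphi)$, both strictly negative since convexity propagates the kink assumption \eqref{kink} throughout the differentiable portion of $I_{\rm angle}$ to the left of $2\pi/3$; the $O(\eta)$ nonbonded perturbation is absorbed by taking $\eta$ small. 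In Case 3 the $\theta$-component contains the dominant negative term $c_{\rm angle} v'_{\rm angle}(3\pi/5) + c_{\rm kink} v'_{\rm kink}(3\pi/5)$, which controls the $O(\eta'+\eta)$ perturbations from the dihedral and nonbonded parts. The $\varphi$-component is the subtle one: in the subcases with $c_{\rm angle}=1$ the leading term $2 v'_{\rm angle}(2\pi/3)$ vanishes, so one must rely on the dihedral hypothesis \eqref{v4hp}, which contributes $\eta' \cdot \frac{\mathrm{d}}{\mathrm{d}\varphi} v_{\rm dihedral}(3\pi/5,\varphi,\varphi)|_{\varphi=2\pi/3} < 0$ and competes against the nonbonded push $\eta\, v'_{\rm nbd}(h^*)\, a^* b^* \sin(2\pi/3)/h^* \ge 0$. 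The hypothesis $\eta/\eta'$ small is precisely what guarantees that the dihedral term wins; in the subcases with $c_{\rm kink}=1$ the strictly negative kink contribution already dominates and smallness of $\eta/\eta'$ is not truly necessary.

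The main obstacle is exactly this Case 3 balance with $c_{\rm angle}=1$ and $c_{\rm nbd}=1$, where the cancellation $v'_{\rm angle}(2\pi/3)=0$ eliminates the simple competition and forces reliance on $\eta \ll \eta'$ together with the sign information \eqref{v4hp}. Once the bond and angle KKT conditions are verified in each case, strict convexity of $g$ on the closed convex $K$ forces $y^*$ to be the unique minimizer, so \eqref{monassertion} holds with strict inequality away from $y^*$, completing the proof.
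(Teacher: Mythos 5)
Your argument is correct, and every quantitative ingredient in it coincides with one in the paper's proof: the vanishing of the bond components of the gradient at $(a^*,b^*)$ via the first-order condition \eqref{firstorder} for the reduced energy $E_\eta$, the strictly negative one-sided derivative in $\theta$ at $3\pi/5$ coming from (strong) convexity of the angle potential with minimum at $2\pi/3$, the strictly negative left derivative in $\varphi$ at $2\pi/3$ supplied by \eqref{kink} or by $\eta'$ times \eqref{v4hp}, and the $O(\eta)$ control of the nonbonded perturbation with $\eta/\eta'$ small resolving the dihedral-versus-nonbonded competition when $c_{\rm angle}=1$. What differs is the packaging. The paper does not invoke Proposition \ref{thmcxty} here at all: it splits $\widehat{E}_\bfc=f_1+f_2$, proves a linear-in-$(\theta_0-\theta)+(\varphi_0-\varphi)$ lower bound for the angular part $f_2$ and a matching $C_1(\eta+\eps)$ upper bound for the angular variation of $f_1$, deduces monotonicity in the angles at fixed $(a,b)$, and only then reduces to the two-dimensional minimization of $E_\eta$. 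You instead treat the whole thing as a single constrained convex program on $K=\lbrace \theta\le 3\pi/5,\ \varphi\le 2\pi/3\rbrace$, pull strict convexity of $g$ from Proposition \ref{thmcxty} through the injective affine embedding $(a,b,\theta,\varphi)\mapsto(a,b,b,\theta,\varphi,\varphi)$, and verify a KKT/subgradient condition at the corner $y^*$. Your route is slightly slicker (one convexity appeal plus one first-order check gives existence, uniqueness, and the equality characterization in one stroke, and it makes transparent why $\eta/\eta'$ small is only genuinely needed in the $c_{\rm angle}=1$, $c_{\rm nbd}=1$ subcases); the paper's route is more self-contained at this point and avoids having to discuss subdifferentials of the nonsmooth $v_{\rm kink}$ at $2\pi/3$, working only with one-sided difference quotients. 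Two small points of care in your version: since $v_{\rm kink}$ is not differentiable at $2\pi/3$, the KKT condition must be phrased with an element of the sum subdifferential (Moreau--Rockafellar applies since each summand is convex or smooth), choosing the left derivative for the $\varphi$-slot; and the identity $\partial_a g(y^*)=\partial_b g(y^*)=0$ uses that at $(\theta,\varphi)=(3\pi/5,2\pi/3)$ the second-neighbor lengths in $\widehat{E}_{\rm nonbond}$ reduce exactly to $\sigma b$ and $\sqrt{a^2+b^2+ab}$, so that $g(\cdot,\cdot,3\pi/5,2\pi/3)$ really is $\tfrac{1}{60}E_\eta(X_{a,b})$ up to an additive constant — both are fine, but worth stating explicitly.
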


\begin{proof}
We first observe that in  Case 1  one has $a^*=b^*=1$ and the assertion follows directly from  \eqref{H2} and \eqref{H3} (cf. also the arguments in \cite[Theorem 7.3]{Mainini-Stefanelli12}). 

For  Cases 2 and 3  we split  $\widehat{E}_\bfc = f_1+f_2$
into the parts $f_1(a,b,\theta,\varphi) = \widehat{E}_{\rm bond}(y) +
\eta\, c_{\rm  nbd } \widehat{E}_{\rm  nonbond }(y)$
and $f_2(\theta,\varphi) = c_{\rm angle}\widehat{E}_{\rm angle}(y) +
c_{\rm kink} \widehat{E}_{\rm kink}(y) + \eta'\,c_{\rm dihedral}
\widehat{E}_{\rm dihedral}(y)$, where for shorthand $y =
(a,b,b,\theta,\varphi,\varphi)$. By the smoothness of $v_{\rm bond}$
and $v_{\rm  nbd }$  and by  \eqref{perturbed2bonds2}
we  can find  a constant $C_1$ independent of $\eta$ such that
$| \DD_{\theta,\varphi}  f_1(a,b,\theta,\varphi)|\le C_1\eta$
for all $a,b \in I_{\rm bond}$ and $\theta \in \Theta_\eps$,  $\varphi
\in \Phi_\eps$. Let for shorthand $\theta_0 = 3\pi/5$ and $\varphi_0
=2\pi/3$. Then we get by  Taylor's formula for all $a,b \in I_{\rm
  bond}$, $\theta \in \Theta_\eps$,    and  $\varphi \in \Phi_\eps$  one has 
\begin{align}\label{mon:f_1}
|f_1(a,b,\theta,\varphi) - f_1(a,b,\theta_0,\varphi_0)| &\le C_1 \eta (|\theta - \theta_0| +|\varphi - \varphi_0|) + C_1(|\theta - \theta_0| +|\varphi - \varphi_0|)^2 \notag \\ & \le C_1(\eta+ \eps) (|\theta - \theta_0| +|\varphi - \varphi_0|)  
\end{align}
passing possibly to a larger constant $C_1$  without introducing
new notation.  We now show that there  exists 
a constant $C_2>0$ such that for all $\theta \in \Theta_\eps$,
$\varphi \in \Phi_\eps$ with $\theta \le \theta_0$,  and 
$\varphi \le \varphi_0$  one has 
\begin{align}\label{mon:f_2}
f_2(\theta,\varphi) - f_2(\theta_0,\varphi_0) &\ge C_2(c_{\rm kink} + \eta'\, c_{\rm dihedral}) ( (\theta_0 -\theta)  + (\varphi_0 - \varphi)).  
\end{align}
 In Case 2  we use the convexity of $v_{\rm kink}$ in $I_{\rm
  angle}$ and condition \eqref{kink} to derive $v_{\rm kink}(\theta) -
v_{\rm kink}(\theta_0) \ge  \lambda(\theta_0 -\theta)$  and
  $v_{\rm kink}(\varphi) - v_{\rm kink}(\varphi_0) \ge  \lambda(\varphi_0 -\varphi)$, where $\lambda := -\lim_{\theta \uparrow 2\pi/3} v'_{\rm kink}(\theta)>0$ by \eqref{kink}. This implies \eqref{mon:f_2}. 
 In Case 3  we first observe that the strong convexity of
$v_{\rm angle}$ assumed in \eqref{H3} and the fact that the minimum
value is attained at $ \theta_0 =  2\pi/3$ implies $v_{\rm angle}(\theta) - v_{\rm angle}(\theta_0) \ge C_3(\theta_0-\theta)$ for a constant $C_3>0$ depending only on $\lambda_{\rm angle}$.  Moreover, the smoothness of $v_{\rm dihedral}$ implies $v_{\rm dihedral}(\theta,\varphi,\varphi) - v_{\rm dihedral}(\theta_0,\varphi,\varphi) \ge -  C_4(\theta_0-\theta)$ for a constant $C_4>0$ large enough. Now we use \eqref{v4hp} and Taylor's formula to compute 
$$v_{\rm dihedral}(\theta_0,\varphi,\varphi) - v_{\rm dihedral}(\theta_0,\varphi_0,\varphi_0) \ge   \lambda'(\varphi_0 - \varphi) -   C_5(\varphi_0 - \varphi)^2 \ge (\lambda' - C_5\eps)(\varphi_0 - \varphi)$$
for $C_5>0$ large enough, where $$\lambda'=-\frac{\rm d }{\rm d
  \varphi} v_{\rm dihedral}(\theta_0, \varphi,\varphi)\Big|_{\varphi =
  2\pi/3}>0$$  by  assumption  \eqref{v4hp}.  Collecting the last estimates and using $v_{\rm angle}(\varphi) \ge v_{\rm angle}(\varphi_0)$ we conclude 
$$f_2(\theta,\varphi) \ge f_2(\theta_0,\varphi_0) + (C_3-C_4\eta')(\theta_0-\theta) + \eta' (\lambda'- C_5\eps)(\varphi_0 - \varphi),$$
which for $\eta'$ and $\eps$ small enough implies \eqref{mon:f_2}. 

We are now in the position to show \eqref{monassertion}.  By
combining  \eqref{mon:f_1} and \eqref{mon:f_2} we derive for $\eta/\eta'$ small and $\eps$ small with respect to $\eta$ and $\eta'$
$$ \widehat{E}_\bfc(a,b,b,\theta,\varphi,\varphi) \ge \widehat{E}_\bfc(a,b,b,\theta_0,\varphi_0,\varphi_0),$$
where equality only holds if $\theta = \theta_0$ and $\varphi = \varphi_0$. Recalling \eqref{E'} and the fact that $(a^*,b^*)$ minimizes $E_{\eta}(X_{a,b})$ we conclude
 $$ \widehat{E}_\bfc(a,b,b,\theta,\varphi,\varphi) \ge  \dfrac{1}{60} E_{\eta}(X_{a,b}) \ge \dfrac{1}{60} E_{\eta}(X_{a^*,b^*}) = \widehat{E}_\bfc(a^*,b^*,b^*,\theta_0,\varphi_0,\varphi_0),$$
 where due to the uniqueness of the minimizer of ${E}_{\eta}$ equality only holds if $a=a^*$ and $b=b^*$. 
\end{proof}

We are now in the position to prove the stability of 
  $X_{a^*b^*}$ under small perturbations. In the following proof we
will treat  Assertions  3.1,  3.2,  and  3.3 of  Theorem \ref{maintheorem} simultaneously.

\begin{proof}[Proof of Assertion 3 of Theorem \ref{maintheorem}]
Let  $P = \lbrace x'_1,\ldots,x'_{60} \rbrace \in \mathcal{P}(X_{a^*,b^*})$ be given and suppose that the assumptions stated in Assertion 3.1,  3.2, or 3.3, respectively, are satisfied.   Recalling \eqref{perturbedbonds}-\eqref{perturbed2bonds2}, to each $x'_i$, $i=1,\ldots,60$, we associate $a_i,b_i^1,b_i^2$,$\theta_i$, $\varphi_i^1,\varphi_i^2$.  Let us assume that $P$ has not the same geometry as $X_{a^*,b^*}$, i.e. not all bond lengths and angles coincide with the corresponding values of $X_{a^*,b^*}$. We  show that then indeed $E_\bfc(P) > E_\bfc(X_{a^*,b^*})$. 

Define the mean values
\begin{align*}
  \bar{a} = \frac{1}{60} \sum_{i=1}^{60} a_i, \quad \bar{b} =
  \frac{1}{120}\sum_{i=1}^{60} (b_i^1 + b_i^2), \quad\bar{\theta} =
  \frac{1}{60} \sum_{i=1}^{60} \theta_i, \ \text{and}  \ \bar{\varphi} =
  \frac{1}{120}\sum_{i=1}^{60} (\varphi_i^1 + \varphi_i^2).
\end{align*}
Then we apply Proposition \ref{thmcxty} for $\eta$ and $\eta'$ small enough (if $c_{\rm  nbd } = 1$ or $c_{\rm dihedral} = 1$, respectively)  and use twice the strict convexity of $\widehat{E}_\bfc$ to obtain
\begin{align}
E_\bfc(P) &= \sum^{60}_{i=1}\widehat{E}_\bfc\left(a_i,b_i^1,b_i^2,\theta_i, \varphi_i^1,\varphi_i^2\right) \nonumber\\ 
& \ge \sum^{60}_{i=1}\widehat{E}_\bfc \left(a_i,\dfrac{b_i^1 +
    b_i^2}{2},\dfrac{b_i^1 + b_i^2}{2},\theta_i,
  \dfrac{\varphi_i^1+\varphi_i^2}{2},\dfrac{\varphi_i^1+\varphi_i^2}{2}\right) \nonumber
\\ & \ge 60\,
\widehat{E}_\bfc\left(\bar{a},\bar{b},\bar{b},\bar{\theta},\bar{\varphi},\bar{\varphi}\right), \label{stabil1}
\end{align}
where we have equality if and only if each bond and each angle coincides with the corresponding mean value. As the five angles of each pentagon sum up at most to $3\pi$, we obtain $\bar{\theta} \le 3\pi/5$. A similar argument for the angles of a hexagon, whose sum does not exceed $4\pi$, yields $\bar{\varphi} \le 2\pi/3$. We observe that for each vector $\bfc$ in the  Assertion 3.1-3.3 one of the Assumptions 1-3 of Proposition \ref{thmmon} is satisfied. Consequently, we obtain 
\begin{align}\label{stabil2}
60 \widehat{E}_\bfc(\bar{a},\bar{b},\bar{b},\bar{\theta},\bar{\varphi},\bar{\varphi}) \ge 60\widehat{E}_\bfc(a^*,b^*,b^*,3\pi/5,2\pi/3, 2\pi/3) = E_\bfc(X_{a^*,b^*}),
\end{align}
where equality only holds if $\bar{a} = a^*$, $\bar{b} = b^*$, $\bar{\theta}= 3\pi/5$ and $\bar{\varphi} = 2\pi/3$. As by assumption $P$ has not the same geometry as $X_{a^*,b^*}$,  \eqref{stabil1}-\eqref{stabil2} yield $E_\bfc(P)>E_\bfc(X_{a^*,b^*})$. 
\end{proof}

\section{Nonstability results}\label{counterexample section}

%In this section we establish Assertion 4. of Theorem \ref{maintheorem}. After translation and rotation of $C^*_{60}$ we may assume that $x^*_i \in \Rz \times \Rz \times [0,\infty)$ for $i=1,\ldots,60$ and one pentagon, whose atoms are labeled with $x^*_1,\ldots,x^*_5$, is contained in $\Rz \times \Rz \times \lbrace 0 \rbrace$, more precisely $x^*_i = (\cos(2i\pi/5), \sin(2i\pi/5),0)$ for $i=1,\ldots,5$. Note that each $x^*_i$, $i=1,\ldots,5$, has exactly one neighboring atom not lying in $\lbrace x_1^*,\ldots,x_5^* \rbrace$ which we label $x^*_{i+5}$. By $H^*_i$ we denote the hexagonal faces containing the atoms $x^*_{i+5}$ and $x^*_{i+6}$ for $i=1,\ldots,5$, where we set $x^*_{11} = x^*_6$.   

 In this section we establish Assertion 4 of Theorem
 \ref{maintheorem}.  After translation and rotation of 
   $X_{a^*,b^*} = (x^*_1,\ldots,x_{60}^*)$ we may assume that $x^*_i
 \in \Rz \times \Rz \times [0,\infty)$ for all indexes
 $i=1,\ldots,60$, and  that  one pentagon of $X_{a^*,b^*}$ is contained in $\Rz \times \Rz \times \lbrace 0 \rbrace$ with vertices 
 $$V_i:=(\cos(2i\pi/5), \sin(2i\pi/5),0)$$
  for $i=1,\ldots,5$. Let us relabel the atoms of $X_{a^*,b^*}$ so that, for $i=1,\ldots,5$, we have $x^*_i := V_i$, and, for $i=6,\ldots,10$, the atom $x^*_{i}$ is the only neighboring atom of  $x^*_{i-5}$ not lying in $\lbrace V_1,\ldots,V_5 \rbrace$.  By $H^*_1, \ldots, H^*_5$ we denote the planes in $\Rz^3$ containing the five hexagonal faces of $X_{a^*,b^*}$ adjacent to the pentagon formed by $\lbrace V_1,\ldots, V_5 \rbrace$.

%Note that each $x^*_i$, $i=1,\ldots,5$, has exactly one neighboring atom not lying in $\lbrace x_1^*,\ldots,x_5^* \rbrace$ Let us relabel those atoms so that they are denoted by  $x^*_{i+5}$. By $H^*_i$ we denote the hexagonal faces containing the atoms $x^*_{i+5}$ and $x^*_{i+6}$ for $i=1,\ldots,5$, where we set $x^*_{11} = x^*_6$.     

The perturbation $P_1 =  \lbrace x'_1,\ldots,x'_{60} \rbrace $ is defined by  setting $x'_i := x_i^*$ for $i \ge 6$ and 
$$x'_i := (\cos(2i\pi/5 + t_1), \sin(2i\pi/5+t_1),t_2)$$
 for $i=1,\ldots,5$ and for some positive (small) constants $t_1$ and $t_2$ to be specified, i.e.,  the transformation rotates one of the twelve pentagonal faces of the molecule.

\begin{proof}[Proof of Assertion 3.1 of Theorem \ref{maintheorem}]
First we see that for $t_1,t_2$ sufficiently small $P_1 \in
\mathcal{P}(X_{a^*,b^*})$. Moreover, the geometry of $P_1$ and
$X_{a^*,b^*}$ are clearly different as, e.g., the hexagons adjacent to
the pentagon formed by $x'_1,\ldots,x'_5$ are not planar. Recall that
by Assertion 2 of Theorem  \ref{maintheorem} we have  $X_{a^*,b^*}
  = X_{1,1}$. Consequently, to prove $E_\bfc(P_1) = E_\bfc(X_{1,1})$
for $\bfc = (0,0,0,0)$,  it  suffices to show that each bond has length $1$. 

We observe  that the only bonds that can present a different length
in $P_1$ with respect to $X_{a^*,b^*}$ are the ones in
$\bigcup^5_{i=1} \mathcal{B}_1(x'_i)$. As the pentagon contained in
$\Rz \times \Rz \times \lbrace 0 \rbrace$ is  just  rotated, we find $|[x_1,x_2]|= |[x_2,x_3]|= \ldots = |[x_5,x_1]| = 1$. Finally, for a suitable choice of $t_2$ with respect to $t_1$ one can additionally obtain  $|[x_i,x_{i+5}]| = 1$ for $i=1,\ldots,5$. This concludes the proof. 
\end{proof}

The definition of perturbation $P_2$ is more involved. One pentagonal face of $X_{a^*,b^*}$ is moved in  such a way that the length of the bonds in $\bigcup_{i=1}^{60}\mathcal{B}_1(x_i^*)$  shared by two hexagons do not change to  first order. More precisely,  for each $i=1,\ldots,5$ we find a unique vector 
$$v_i = (a \cos(2i\pi/5),a\sin(2i\pi/5),b) \in \Rz^3$$
 for suitable  constants  $a,b\in(0,1)$ such that
\begin{align}\label{atomperturb}
|v_i|=1\quad\textrm{and}\quad v_i \cdot (x^*_i - x^*_{i+5}) =0.
\end{align} 
Then we define $P_2 = \lbrace x'_1,\ldots,x'_{60} \rbrace$ by  setting $x'_i := x_i^*$  for $i \ge 6$ and  $x'_i := x_i^* + tv_i$  for $i=1,\ldots,5$  and for  a small constant  $t >0$.  By \eqref{atomperturb},  each segment $[x_i^*, x_i'] \subset \Rz^3$ is not contained in the planes $H_j^*$, $j=1,\ldots,5$, and therefore   the five hexagons of $P_2$ adjacent to the pentagon $\lbrace x'_1, \ldots,x'_5  \rbrace$ are not planar, but each one is kinked along the corresponding segment with endpoints in $\lbrace x'_6,\ldots,x'_{10} \rbrace$. Observe that the essential point of the transformation $P_2$   is the nonplanarity of these hexagons since hereby (i) the length of the second neighbors can be reduced and (ii)  the energy increase due to the modification of the angles is negligible since $v'_{\rm angle}(2\pi/3) = 0$.

\begin{proof}[Proof of Assertion 3.2 of Theorem \ref{maintheorem}]
   Let $P_2$ be defined as above and note that for $t$
  sufficiently small we have that  $P_2 \in
  \mathcal{P}(X_{a^*,b^*})$. We now proceed in two steps. In Step I we
  analyze the modification of bond lengths and angles induced by the
  perturbation. In Step II, we calculate the energy difference of the
  two configurations $X_{a^*,b^*}$ and $P_2$, which in view of the
  first order optimality condition derived in \eqref{firstorder}
  depends only on the angles (see \eqref{energydiff} below). We 
  will  then  be able to   conclude since the sum over all
  angles  strictly  decreases due to the nonplanarity of 
  the  five hexagons (see \eqref{differentangles} below). 

\emph{Step I.}  The transformation  changes only
 the bond lengths $\bigcup_{i=1}^5 \mathcal{B}_1(x_i)\cup
\mathcal{B}_2(x_i)$, the  angles $\bigcup_{i=1}^{10}
\mathcal{A}(x_i)$, and  the second neighbors $\bigcup_{i=6}^{10}
\mathcal{B}_2(x_i)$.  In particular,   there  exist  $\delta_b, \delta_\varphi \in \Rz$ such that for the covalent bonds and the angles associated to $x'_1,\ldots,x'_5$ we have for $i=1,\ldots,5$
\begin{subequations}\label{nonstabil1}
\begin{align}
 a_i &= a^* +{\rm O}(t^2)\label{a}\\
 b^j_i &= b^* + t\delta_b + {\rm O}(t^2), \ j=1,2, \\
\theta_i &= 3\pi/5,\\
  \varphi_i^j &= 2\pi/3 + t\delta_{\varphi} + {\rm O}(t^2).\label{phi}
\end{align}
\end{subequations}
Note that  due to the symmetry of the transformation  all these
quantities are  actually independent of $i$ and $j$ and the
pentagon $\lbrace x'_1,\ldots,x'_5 \rbrace$ is regular and planar.
Moreover,  \eqref{a}  follows from the fact that  the length
of the  bonds  shared by two hexagons do not change in first
order by \eqref{atomperturb}. Likewise, for the second-neighbors an elementary computation yields in view of  \eqref{perturbed2bonds2}, \eqref{ph60}, \eqref{nonstabil1},  and $\cos'(2\pi/3)= -\sqrt{3}/2$ for $i=1,\ldots,5$, $j=1,2$ (we again set $\sigma = 2 \sin(3\pi/10)$ for shorthand)
\begin{align*}
 p_i &= \sqrt{2(1-\cos(\theta_i))}(b^* + t\delta_b + {\rm O}(t^2))  = p^* + t\delta_b  \sigma  + {\rm O}(t^2),\\
  h_i^j &= \sqrt{a_i^2 + (b_i^j)^2 - 2a_i b_i^j \cos(\varphi_i^j)} \\&= \sqrt{(a^*)^2 + (b^*)^2 + a^* b^* + 2t \delta_b b^* - 2ta^*  (\cos(2\pi/3)\delta_b- b^*\sqrt{3}\delta_\varphi/2) + {\rm O}(t^2)} \nonumber\\
& = h^* + \dfrac{t}{h^*} \left( \delta_b\left(b^*  + a^*/2\right) +
  \sqrt{3}a^*b^* \delta_\varphi/2 \right) +  {\rm O}(t^2). 
\end{align*}
Moreover, for $i=6,\ldots,10$, $j=1,2$, we find for $\delta_\varphi'
\in \Rz$ by a similar computation
\begin{subequations}
  \label{nonstabil2}
  \begin{align}
    % \begin{split}
    a_i &= a^*,  \ \ b_i^j = b^*,  \ \ \theta_i = 3\pi/5, \ \
    p_i = p^*,\\
\varphi^j_i &= 2\pi/3 + t\delta_\varphi' + {\rm O}(t^2),\label{phi2}\\
    h_i^j &= \sqrt{a_i^2 + (b_i^j)^2 - 2a_i b_i^j \cos(\varphi_i^j)} = \sqrt{(a^*)^2 + (b^*)^2 - 2a^* b^* \cos(\varphi_i^j)} \nonumber\\
    &= \sqrt{(a^*)^2 + (b^*)^2 + a^* b^* +  \sqrt{3}ta^* b^* \delta'_\varphi) + {\rm O}(t^2)} \nonumber\\
    & = h^* + \sqrt{3}t/(2h^*) a^*b^* \delta'_\varphi + {\rm O}(t^2).
    % \end{split}
  \end{align}
\end{subequations}

We close the discussion about the modification of bonds and angles by showing 
\begin{align}\label{differentangles}
\delta_\varphi + \delta_\varphi' <0.
\end{align}
To see this,  we recall that the five hexagons adjacent to
  $\lbrace x_1',\ldots,x_5' \rbrace$ are kinked along the
  corresponding segment with endpoints in $\lbrace x'_6,\ldots,x'_{10}
  \rbrace$. Each hexagon consists of two planar quadrangles
$Q^1_i,Q^2_i$ with angles $2\pi/3,2\pi/3, \pi/3, \pi/3 $ and
$\varphi,\varphi, \pi-\varphi, \pi- \varphi $ with
$\varphi:=\varphi_1^1$ as given in  \eqref{phi}.

 We have already noticed below \eqref{atomperturb} that the
  segment $[x_i^*, x_i']$  is not contained in  $H_j^*$,
  $i,j=1,\ldots,5$. Consequently, the angle enclosed by the the two
planes containing $Q^1_i$ and $Q^2_i$, respectively, is larger than
$Ct$ for a sufficiently small universal constant $C>0$. By an
elementary trigonometric argument this implies that $\varphi^j_i$ is
smaller than the sum of the two corresponding angles $\pi- \varphi$,
$\pi/3$ of the quadrangles at $x'_i$. More precisely, $\varphi^j_i \le
(\pi - \varphi ) +\pi/3  - C't$ for $C'>0$ small enough for all
$i=6,\ldots,10$, $j=1,2$. Therefore, in view of  \eqref{phi} and
\eqref{phi2}  we derive 
$$2\pi/3 + t\delta_\varphi' + {\rm O}(t^2) \le 2\pi/3 - t\delta_\varphi - C't + {\rm O}(t^2)$$
and see that \eqref{differentangles} holds true since $t>0$.

\emph{Step II.}  We now estimate the difference of
$E_\bfc(X_{a^*,b^*})$ and $E_\bfc(P_2)$.  Let us start from the
case   $\bfc=(1,0,0,1)$. By  \eqref{nonstabil1}, \eqref{nonstabil2} and the fact that $v_{\rm angle}'(2\pi/3)=0$ we get $v_{\rm bond}(a_i) = v_{\rm bond}(a^*) + {\rm O}(t^2)$ and $v_{\rm angle}(\varphi_i^j) = v_{\rm angle}(2\pi/3) + {\rm O}(t^2) = {\rm O}(t^2)$ for $i=1,\ldots,10$, $j=1,2$.   Consequently, recalling \eqref{energysplit} and \eqref{nonstabil1}-\eqref{nonstabil2} we obtain $E_\bfc(P_2) - E_\bfc(X_{a^*,b^*}) = A + B +  C$ with
\begin{align*}
A&:=  \sum^{5}_{i=1} \left(\widehat{E}_{\rm bond}(x_i) -  \widehat{E}_{\rm bond}(x^*_i)\right) = \sum^{5}_{i=1} \sum^{2}_{j=1} \left(\dfrac{1}{2}v_{\rm bond}(b^j_i) - \dfrac{1}{2}v_{\rm bond}(b^*)\right)+ {\rm O}(t^2) \\& \quad\quad\quad\quad\quad\quad\quad\quad\quad\quad\quad\quad\quad \ \quad=  5v_{\rm bond}'(b^*)t\delta_b  + {\rm O}(t^2), \\
B&:= \sum^{10}_{i=1} \left(\widehat{E}_{\rm angle}(x_i) -  \widehat{E}_{\rm angle}(x^*_i)\right) =  {\rm O}(t^2), \\
C&:= \sum^{10}_{i=1} \eta\left(\widehat{E}_{\rm  nonbond
    }(x_i) -  \widehat{E}_{\rm  nonbond }(x^*_i)\right) \\
&= 5 v'_{\rm  nbd }(p^*)\eta t\delta_b \sigma + 10v'_{\rm  nbd }(h^*) \eta \sqrt{3}t/(2h^*)a^*b^* \delta'_\varphi  \\ & \quad\quad\quad\quad\quad\quad\quad \quad\quad\quad\quad\quad + 10v'_{\rm  nbd }(h^*)\eta \dfrac{t}{h^*} \left(  \delta_b(b^*  + a^*/2) + \sqrt{3}a^*b^* \delta_\varphi/2 \right) + {\rm O}(t^2).
\end{align*}
In case $\bfc=(1,0,0,0)$ we obtain  the same estimate with
$B=0$.  Recall that we have identified the geometry of
$X_{a^*,b^*}$ by optimizing the energy $E_{\eta}(X_{a,b})$ defined in
\eqref{E'} in terms of $a,b$. In particular, the first-order
optimality condition $$ \frac{\rm d}{{\rm d} b}E_{\eta}(X_{a^*,b})\Big|_{b
  = b^*}  =
0$$ 
yields  (cf. \eqref{firstorder2})  
$$v_{\rm bond}'(b^*) + \eta \sigma v'_{\rm  nbd }(p^*)+  2\eta \dfrac{a^*+2b^*}{2h^*} v_{\rm  nbd }'(h^*)  = 0.$$
Combining the last two estimates we derive
\begin{align}\label{energydiff}
E_\bfc(P_2) - E_\bfc(X_{a^*,b^*})  =  10v'_{\rm  nbd }(h^*) \eta \sqrt{3}t/(2h^*)  a^*b^* (\delta_\varphi + \delta'_\varphi) + {\rm O}(t^2).
\end{align}
In view of \eqref{differentangles} and the fact that $v'_{\rm  nbd }(h^*)>0$ and $t>0$ we conclude that   $E_\bfc(P_2) - E_\bfc(X_{a^*,b^*}) < 0$. 
\end{proof}

%%%%%%%%%%%%%%%%%%%%%%%%%%%%%%%%%%%%%%%%%%%%%%%%%%
\section*{Acknowledgements}
This work has been funded by the Vienna Science
and Technology Fund (WWTF) through project MA14-009. Partial support 
 by the Austrian Science Fund (FWF) through grants P 27052 and I
 2375-N32.  The Authors would like to acknowledge
the kind hospitality of the Erwin
Schr\"odinger International Institute for Mathematics and Physics,
where part of this research was developed under the frame of the Thematic
Program {\it Nonlinear Flows}. 

%\vspace{15mm}

\bibliographystyle{alpha}

\end{document}